\newtheorem{theorem}{Theorem}[section]
\newtheorem{lemma}[theorem]{Lemma}
\def\mb{\mathbf}
\def\tb {\textbf}
\def\be{\begin{equation}}
\def\ee{\end{equation}}
\def\ben{\begin{equation*}}
\def\een{\end{equation*}}
\def\bea{\begin{eqnarray}}
\def\eea{\end{eqnarray}}
\def\beaa{\begin{eqnarray*}}
\def\eeaa{\end{eqnarray*}}
\begin{document}


\title{Transmit MIMO Radar Beampattern Design Via Optimization on the Complex Circle Manifold }

\author{Khaled~Alhujaili,~\IEEEmembership{Student~Member,~IEEE,}
        Vishal~Monga,~\IEEEmembership{Senior~Member,~IEEE,}
        and~Muralidhar~Rangaswamy,~\IEEEmembership{Fellow,~IEEE}%
       \thanks{
       K. Alhujaili is with the Department of Electrical Engineering, Pennsylvania State University, USA and also with the Department of Electrical Engineering, Taibah University, Al-Madina 344, Saudi Arabia.
       V. Monga is with the Department of Electrical Engineering, Pennsylvania State University, USA.
       M. Rangaswamy is with the RF Exploitation Branch, US Air Force Research Lab, Dayton, Ohio, USA.
         Dr. Rangaswamy was supported by the Air Force Office of Scientific Research under project 2311IN.
         Research was supported by a grant from the Air Force Office of Scientific Research (AFOSR) Award No. FA9550-15-1-0438.}}

	\markboth{IEEE Transactions on Signal Processing, Accepted for Publication, April 2019.}%
{Shell \MakeLowercase{\textit{et al.}}: Bare Demo of IEEEtran.cls for IEEE Journals}
\maketitle

\vspace{-10mm}

\begin{abstract}
	The ability of Multiple-Input Multiple-Output (MIMO) radar systems to adapt waveforms across antennas allows flexibility in the transmit beampattern design. In cognitive radar, a popular cost function is to minimize the deviation against an idealized beampattern (which is arrived at with knowledge of the environment). The optimization of the transmit beampattern becomes particularly challenging in the presence of practical constraints on the transmit waveform. One of the hardest of such constraints is the non-convex constant modulus constraint, which has been the subject of much recent work. In a departure from most existing approaches, 
	we develop a solution that involves direct optimization over the non-convex complex circle manifold. That is, we derive a new projection, descent, and retraction (PDR) update strategy that allows for monotonic cost function improvement while maintaining feasibility over the complex circle manifold (constant modulus set). For quadratic cost functions (as is the case with beampattern deviation), we provide analytical guarantees of monotonic cost function improvement along with proof of convergence to a local minima. We evaluate the proposed PDR algorithm against other candidate MIMO beampattern design methods and show that PDR can outperform competing wideband beampattern design methods while being computationally less expensive. Finally,  orthogonality across antennas is incorporated in the PDR framework by adding a penalty term to the beampattern cost function. Enabled by orthogonal waveforms, robustness to target direction mismatch is also demonstrated.    
\end{abstract}
\begin{IEEEkeywords}
	MIMO radar, wideband beampattern, waveform design, constant modulus, complex circle manifold, manifolds, cognitive radar, PDR.
\end{IEEEkeywords}

\IEEEpeerreviewmaketitle

\section{Introduction}
\label{Sec:Introduction}

Multiple input multiple output (MIMO) radar, in general, transmits independent waveforms from its transmitting elements and observes the backscattered signals (from the target and from interference sources). On the other hand, in standard  phased array radars, many small elements are employed so that each element emits an identical signal (up to a phase shift). These phases are shifted to focus the transmit beam in a certain direction \cite{Antenna_theory_and_design}. 
The advantage of transmitting independent waveforms in MIMO radar provides extra degrees of freedom and the waveforms may be optimized across antennas to enhance the performance of radar systems. 

A central problem in MIMO radar is to design a set of waveforms such that the transmitted beampattern matches certain specifications, e.g.\ a desired beampattern, either for narrowband \cite{fuhrmann2008transmit,lipor2014fourier,sajid2014,cheng2017constant,aubry2016mimo} or wideband \cite{san2005beampattern,he2011wideband,yang2012fast,gong2014transmit,aldayel2017tractable} setups. Although the transmit beampattern is used to focus the transmitted power
	in a certain directions of interest, a well-designed beampattern also helps enhance the Signal-to-Noise-Ratio (SNR)\cite{fuhrmann2008transmit,fuhrmann2004transmit,li2010transmit,li2009mimo}. While 	unconstrained design is straightforward, but this is a highly challenging problem in the presence of practical constraints on the waveform \cite{Patton_Thesis}. 

A principally important constraint on the transmit waveforms is the constant modulus constraint (CMC). The  CMC is crucial in the design process due to the presence of non-linear amplifiers in radar systems \cite{patton2008modulus} which must operate in saturation mode. Existing approaches that deal with beampattern design under CMC can be classified into two categories: indirect and direct approaches. The first category consists of methods that approximate or relax the CMC, i.e., the design process is conducted under approximated constraints and then the produced solution is converted to the constant modulus set. Examples in this category include: the peak-to-average ratio (PAR) waveform constraint \cite{stoica2007probing}, \cite{he2011wideband} and the energy constraint \cite{aubry2014radar}. 
In general, since these approaches deal with an approximation of CMC, performance may degrade considerably in an actual real-world scenario when the constraint is strict \cite{Patton_Thesis}. In the second category are methods that directly enforce CMC and hence lead to better performance compared to the indirect approaches, but with relatively expensive computational procedures, such as a quasi Newton iterative method in \cite{wang2012design}, Semidefinite Relaxation (SDR) with randomization \cite{luo2010semidefinite,cui2014mimo} and the sequence of convex programs approach in \cite{aldayel2017tractable}.

 In this work, our goal is to break the trade-off between  performance measures such as faithfulness to the desired beampattern and computational complexity of designing/optimizing the waveform code. We show that this is possible by invoking principles of optimization over manifolds. That is, we derive a new projection, descent, and retraction (PDR) based numerical algorithm that in each iteration allows for monotonic cost function improvement while maintaining feasibility over the complex circle manifold (constant modulus set). Optimization over such a manifold has been investigated for problems in communications for example \cite{chen2018manifold}. However, the work in \cite{chen2018manifold} deals with an entirely different physical problem (and hence cost function) from ours and does not investigate analytical properties of the solution or convergence of the algorithm, which is a key focus of our work for quadratic loss functions.
 
Besides CMC, imposing orthogonality across antennas has been shown to be particularly meritorious. Orthogonal MIMO waveforms enable the radar system to achieve an increased virtual array \cite{bekkerman2006target,li2009mimo} and, hence leads to many practical benefits \cite{fishler2006spatial,li2007parameter,xu2008target}. From a beampattern design standpoint, a compelling practical challenge is that the {``}directional knowledge" of target and interference sources utilized in specifying the desired beampattern may not be perfect. In such scenarios, it has been shown in \cite{bekkerman2006target,li2009mimo} that the gain loss in the  transmit-receive patterns for orthogonal waveform transmission is very small under target direction mismatch.

Some work has been done towards the joint incorporation of CMC and orthogonality constraints \cite{he2009designing,song2015optimization,cui2017constant,li2018fast} under the umbrella of waveforms with desired auto-and-cross correlation properties. In these works, however, the beampattern is {\em not designed} but an outcome. Recently, in \cite{deng2016mimo}, a MIMO beampattern design that incorporates both CMC and orthogonality  was investigated via a numerical approach based on the simulated annealing algorithm. To incorporate CMC, phase of the waveform vector is optimized numerically but analytical properties/guarantees of the solution are not investigated. 

 Our work addresses the aforementioned challenges in transmit MIMO beampattern design focusing particularly on tractable and scalable approaches in the presence of CMC. Specifically, our contributions include:
\begin{itemize}
	\item \textbf{Projection, Descent and Retraction algorithm (PDR):} A new approach is developed that works directly on the complex circle, i.e.\ descent is achieved while maintaining feasibility in the sense of CMC. The proposed numerical update consists of three steps: (1) \textbf{P}rojection of the gradient of the cost function onto the tangent space of the complex circle manifold, i.e. the CMC set, (2) \textbf{D}escent on the tangent space (affine set), and (3) \textbf{R}etraction back to the complex circle. 
	\item \textbf{Algorithm analysis and convergence guarantees:} For quadratic cost functions, we formally prove that the cost function is monotonically decreasing while updating from one point to another on the complex circle and further convergence is guaranteed to a local minima.
	\item \textbf{Incorporating orthogonality:} We show that the aforementioned PDR technique can be applied to enforce orthogonality across antennas by introducing an orthogonality encouraging penalty term with the cost function.
	\item \textbf{Numerical simulations insights and validation:}  We compare the PDR algorithm against the state-of-the-art approaches in MIMO beampattern design which address the CMC constraint. Results show that we can achieve better fidelity against a desired beampattern, at a remarkably lower computational cost. We also show that when orthogonality is incorporated, the PDR algorithm can achieve a beampattern design that exhibits robustness to target direction mismatch, which is hugely desirable in real-world scenarios where the specification of an idealized beampattern may not be exact.
	
\end{itemize}

The rest of the paper is organised as follows. The problem formulation is presented in Section \ref{Sec:System_Model}. In Section \ref{Sec:PDR}, we provide a brief background on optimization over manifolds and develop the proposed PDR algorithm for beampattern design in the presence of the constant modulus constraint, equivalent to optimization over the complex circle manifold. New analytical results are provided in this setting that prove monotonic cost function improvement as well as convergence. Also presented in this section an extension towards incorporating the orthogonality constraint.
Section \ref{Sec:Experiments} compares and contrasts the proposed PDR algorithm against the state-of-the-art approaches in wideband transmit MIMO beampattern design. Concluding remarks and possible future research directions are discussed in Section \ref{Sec:Conclusion}.

	\textbf{Notation:} $\mathbb{C}^N$ and $\mathbb{R}^N$ denote the $N$-dimensional complex and real vector spaces, respectively. We use boldface upper case for matrices and boldface lower case for vectors. $\|\mb x\|_2$ is the 2-norm of the vector $\mb x$ and $\|\mathbf{X}\|_F$ is the Frobenius norm of the matrix $\mb X$. The transpose, the conjugate, and the conjugate transpose (Hermitian) operators are denoted by $(.)^T$, $(.)^*$, and $(.)^H$ respectively. $\odot$ and $\otimes$ denote the Hadamard and Kronecker product respectively. $\text{Re}(.)$ and $\text{Im}(.)$ denote extraction of the real part, and imaginary part of a complex number (or vector), respectively. $|x|$ denotes modulus of the complex number $x$ and $|\mb x|$ is a vector of element wise absolute values of $\mb x$, i.e., $|\mb x|=\big[|x_1|\ |x_2| \ \hdots \ |x_L|\big]^T$. $\nabla_{\mb s}(f(\mb s))$ denotes the gradient of the function $f$ w.r.t. the vector $\mb s$. $\text{vec}(.)$ denotes the vectorization operator. $\mb I_{L}$ is an $L \times L$ identity matrix. For matrices $\mb A$ and $\mb B$, $\mb A \geq \mb B \Rightarrow \mb A-\mb B \geq 0$, i.e., the matrix $\mb A-\mb B$ is positive-semi definite. $\mathbbm{1}_M$ denotes an $M\times M$ matrix with all ones. $\text{ddiag}(\mb A)$ sets all off-diagonal entries of the matrix $\mb A$ to zero.
\begin{figure}[!ht]
	\centering
	\includegraphics[width=1\linewidth]{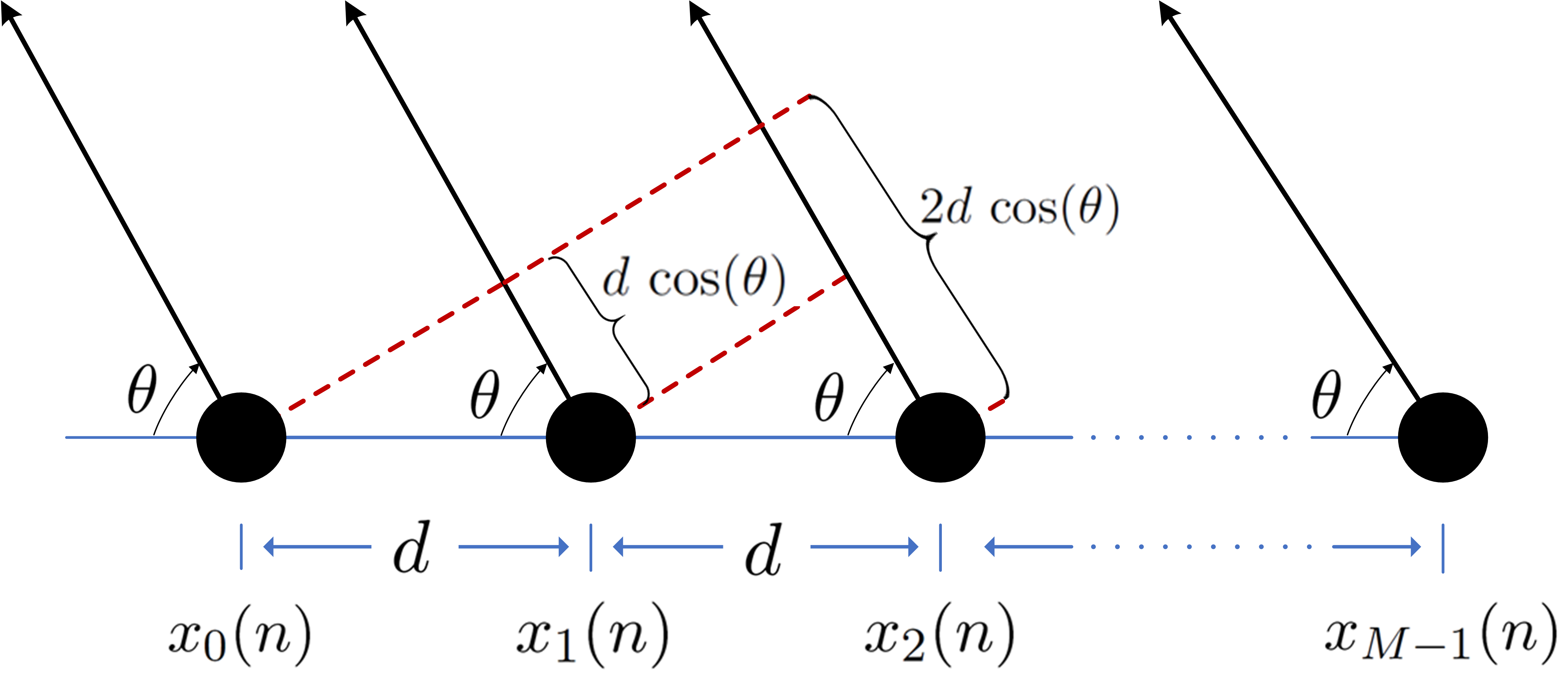}
	\caption{Uniform Linear Array (ULA) MIMO radar system \label{ULA}}
\end{figure}
\section{Problem formulation}
\label{Sec:System_Model}
Consider a Uniform Linear Array (ULA) MIMO radar system that employs  $M$ transmit antennas with inter-element spacing $d$, as shown in Fig. \ref{ULA}. The transmitted bandpass signal transmitted by the $m^{\text{th}}$ antenna is given by 
\begin{IEEEeqnarray}{rCl}  \label{eq:bandpass}
s_m(t) = x_m(t) \ e ^{j2\pi f_c t}
\end{IEEEeqnarray} where $f_c$ is the carrier frequency and  $x_m(t)$ is the baseband signal. 

The baseband signal $x_m(t)$ is sampled to $N$ samples with sampling rate $T_s=1/B$ and the samples are collected in the following vector   
\be \label{eq:disbaseband} \mb x_m = [x_m(0)\quad x_m(1) \quad ... \quad x_m(N-1)]^T \ee where $x_m(n)\triangleq x_m(t=nT_s),\ n = 0,1,\hdots , N-1$, and $B$ is the bandwidth in Hz. 

Let $y_m(p)$ be the discrete Fourier Transform (DFT) of $x_m(n)$ and expressed as:
\be \label{eq:DFT}
y_m(p)=\sum_{n=0}^{N-1} x_m(n) e^{-j2\pi \frac{np}{N}}, \quad p= -\frac{N}{2},...,0, ..., \frac{N}{2}-1
\ee
The discrete frequency beampattern in the far-field at spatial angle $\theta \in [0^{\circ},180^{\circ}]$ is \cite{he2011wideband}:
\be \label{beam1}
P(\theta, p)=|\mb a^H(\theta,p)\mb y_p|^2
\ee
where $\mb a(\theta, p)\in \mathbb{C}^{M\times 1}$ is the steering vector at frequency $\frac{p}{NT_s}+f_c$ defined as
\begin{IEEEeqnarray}{rCl}
	\IEEEeqnarraymulticol{3}{l}{\mb a(\theta, p)=}\nonumber\\* \quad
	& &\big[1\quad e^{j2\pi(\frac{p}{NT_s}+f_c)\frac{d\cos\theta}{c}} \, ...\, e^{j2\pi(\frac{p}{NT_s}+f_c)\frac{(M-1)d\cos\theta}{c}} \big]^T\label{steeringvector}
\end{IEEEeqnarray} 

and $\mb y_p=[y_0(p) \quad y_1(p) \quad ... \quad y_{M-1}(p)]^T$.

Furthermore, the spatial angle $\theta$ can be discretized by dividing the interval $ [0^{\circ},180^{\circ}]$ into $S$ sub-intervals, i.e., $\{\theta_s\}_{s=1}^S$ and hence the steering vector $\mb a(\theta, p)$ can be written in terms of $\theta_s$ and $p$ as $$\mb a_{sp}=\mb a(\theta_s, p),\ s=1,...,S$$ Using these notations, the beampattern in Eq (\ref{beam1}) can be expressed in discrete angle-frequency as $$P_{sp}=|\mb a^H_{sp} \mb y_p|^2=|\mb a^H_{sp} \mb F_p \mb x|^2$$ where $\mb x \in \mathbb{C}^{MN\times 1}$ is the concatenation of the waveforms vectors defined in Eq (\ref{eq:disbaseband}), i.e., \be \label{Eq:vecx}\mb x=[\mb x^T_0 \quad \mb x^T_1 \quad ... \quad \mb x^T_{M-1}]^T\ee and $\mb F_p= \mb e_p\otimes \mb I_M$ 
where $\mb e_p=[1 \quad e^{-j2\pi \frac{p}{N}} \quad ...\quad  e^{-j2\pi \frac{(N-1)p}{N}}]$. The beampattern design problem under constant modulus constraint can be formulated as
\be
\label{Eq:BP}
\begin{array}{cc}
	\displaystyle
	\min_{\mb x} &\sum_{s=1}^{S}\sum_{p=-\frac{N}{2}}^{\frac{N}{2}-1} [d_{sp}-|\mb a^H_{sp} \mb F_p \mb x|]^2\\
	\text{s.t.:  } & |\mb x|= \mb 1

\end{array}
\ee 
where $d_{sp} \in \mathbb{R}$ is the desired beampattern and the constant
modulus constraint ($|\mb x| = \mb 1$) implies that $|x_m(n)| = 1$ for
$m = 0, 1, \hdots ,M$ and $n = 0, 1, \hdots ,N$. We note that the a cost function in Eq (\ref{Eq:BP}) has been the focus of much past work \cite{he2011wideband,yang2012fast,gong2014transmit,aldayel2017tractable} with different ways of specifying $d_{sp}$.

Note that the cost function in Eq (\ref{Eq:BP}) is not tractable due {to} the $|.|$ (absolute value) operator, this operator makes the function non-differentiable w.r.t. the variable $\mb x$.
To overcome this issue, it has been shown in \cite{he2011wideband} that for a generic term $[d_{sp}-|\mb a^H_{sp} \mb F_p \mb x|]^2$ of Eq (\ref{Eq:BP}), the following holds 
\be
\begin{split}
	&\min_{\mb \phi_{sp}} |d_{sp} e^{j\phi_{sp}}- \mb a^H_{sp} \mb F_p \mb x|^2\\&=\min_{\mb \phi_{sp}} \Big\{d_{sp}^2+ |\mb a^H_{sp} \mb F_p \mb x|^2\\ &\ \ \ -2\text{Re}[d_{sp}|\mb a^H_{sp} \mb F_p \mb x|\text{cos}(\phi_{sp}-\text{arg}(\mb a^H_{sp} \mb F_p \mb x))]\Big\}\\&
	= [d_{sp}- |\mb a^H_{sp} \mb F_p \mb x|]^2\ \  (\text{for}\ \phi_{sp}=\arg\{\mb a^H_{sp} \mb F_p \mb x\})
\end{split}
\ee
In view of the above, the authors in \cite{he2011wideband} formulate the following problem over $\phi_{sp}$ and $\mb x$ jointly 
\be
\label{Eq:NA}
\begin{array}{cc}
	\displaystyle
	\min_{\mb x,\{\phi_{sp}\}_{\forall s,p}} &f(\mb x)=\sum_{s=1}^{S}\sum_{p=-\frac{N}{2}}^{\frac{N}{2}-1}  |d_{sp}e^{j\phi_{sp}}-\mb a^H_{sp} \mb F_p \mb x|^2\\
	\text{s.t.:  } & |\mb x|= \mb 1
\end{array}
\ee 
With this form, the beampattern design problem will be carried out over two minimization stages: one w.r.t. $\phi_{sp}$ for fixed $\mb x$ with a minimizer $\phi_{sp}=\arg\{\mb a^H_{sp} \mb F_p \mb x\}$ and the second one will be w.r.t. $\mb x$ for fixed $\phi_{sp}$. It is worthwhile observing that the phase variable $\phi_{sp}$ is not inherent to the problem but introduced to make the problem tractable, i.e., when the phases of $d_{sp}e^{j\phi_{sp}}$ and $\mb a^H_{sp} \mb F_p \mb x$ agree, the cost function is a quadratic w.r.t. $\mb x$. Therefore, in this work, we are seeking to optimize $f(\mb x)$ w.r.t. $\mb x$ for fixed $\phi_{sp}$ under CMC.
 
Then, the cost function in problem (\ref{Eq:NA}) can be rewritten compactly, with fixed $\phi_{sp}$, as:  

\begin{IEEEeqnarray}{rCl} 
	f(\mb x)&=& \sum_{p} \|\mb d_p-\mb A_p \mb F_p \mb x \|_2^2\nonumber\\
	&=&\ \sum_{p} \mb x^H \mb F^H_p \mb A^H_p\mb A_p \mb F_p\mb x - \mb d^H_p \mb A_p \mb F_p\mb x - \mb x^H \mb F^H_p \mb A^H_p \mb d_p\nonumber\\
	&&+ \>\sum_{p}\mb d_p^H\mb d_p\nonumber\\
	&=&\mb x^H \mb P \mb x - \mb q^H \mb x - \mb x^H \mb q+r\label{Eq:cost}
\end{IEEEeqnarray}

where 
\ben
\mb A_{p}={\begin{bmatrix}
		\mb a^H_{1p} \\
		\vdots\\
		\mb a^H_{Sp}
\end{bmatrix}}, \quad
\mb d_{p}={\begin{bmatrix}
		d_{1p} e^{j\phi_{1p}}\\
		\vdots\\
		d_{Sp} e^{j\phi_{Sp}}
\end{bmatrix}},
\een and $\mb P=\sum_{p}  \mb F^H_p \mb A^H_p\mb A_p \mb F_p$, $\mb q=\sum_{p}\mb F^H_p \mb A^H_p \mb d_p$ and $r=\sum_{p}\mb d_p^H\mb d_p$.
Consequently, the problem in (\ref{Eq:NA}) is reframed as 
\be
\label{Eq:P}
\begin{array}{cc}
	\displaystyle
	\min_{\mb x} &f(\mb x)=\mb x^H \mb P\mb x - \mb q^H \mb x - \mb x^H \mb q+r\\
	\text{s.t.:  } & |\mb x|= \mb 1
\end{array}
\ee 
This optimization problem is known to be a hard non-convex NP-hard problem \cite{soltanalian2014designing}. Some of the best known methods that develop a solution for this form are: Wideband Beampattern
Formation via Iterative Techniques (WBFIT) \cite{he2011wideband}, Semi-Definite relaxation with Randomization (SDR) \cite{luo2010semidefinite,cui2014mimo}, 
the Monotonically Error-bound Improving Technique (MERIT)
\cite{soltanalian2014designing}, the Iterative Algorithm for Continuous Phase Case (IA-CPC) \cite{cui2017quadratic}, design algorithm based on Alternating
	Direction Method of Multipliers (ADMM) \cite{Liang16}.
Some of these approaches suffer from low performance accuracy (in terms of deviation from the desired beampattern) while others involve
relatively expensive computational procedures. Importantly, CMC is extracted in different parts of the optimization or in some other methods approached asymptotically \cite{aldayel2017tractable}, but a direct optimization over the non-convex CMC remains elusive.

We take a drastically different approach by invoking principles of optimization over non-convex manifolds. Our focus is on developing a gradient based method, which can enable descent on the complex circle manifold (formal manifold terminology for the CMC) while maintaining feasibility.

Before describing our solution in the next Section, we make an alteration to the cost function by adding $\gamma \mb x^H \mb x$: 
\be \label{Eq:P2}
\begin{array}{cc}
	\displaystyle
	\min_{\mathbf{x} \in \mathbb{C}^{L}} &	\bar{f}(\mb x)=\mb x^H (\mb P+\gamma \mb I)\mb x - \mb q^H \mb x - \mb x^H \mb q\\
	\text{s.t.:  } & |\mb x|= \mb 1\\
\end{array}
\ee
where $L=MN$, and $\gamma \geq0$ (it will be used later in Lemma \ref{Lemma:L2} to control convergence). Since the problem in (\ref{Eq:P2}) enforces CMC, the term $\gamma \mb x^H \mb x$ is constant (i.e. $\gamma \mb x^H \mb x = \gamma L$). Hence, the optimal solution of the problem in (\ref{Eq:P}) and the optimal solution of the problem in (\ref{Eq:P2}) are identical for any $\gamma \geq0$.  

\section{Constant modulus constraint and optimization over manifolds}
\label{Sec:PDR}
The search space or the feasible set of the problem in (\ref{Eq:P2}) can be thought of as the product of $L$ (complex) circles, i.e.,
$$\underbrace{\mathcal{S}\times \mathcal{S} \hdots \times \mathcal{S}}_{L \ \text{times}}$$ where $\mathcal{S}$ is one (complex) circle which is defined as $ \mathcal{S} \overset{\triangle}{=} \{ x \in \mathbb{C} : x^* x=\text{Re}\{x\}^2+\text{Im}\{x\}^2 = 1\}$. This set ($\mathcal{S}$) can be seen as a sub-manifold of $\mathbb{C}$ \cite{bandeira2017tightness} and hence, the product of such $L$ circles is a sub-manifold of $\mathbb{C}^L$ \cite{bandeira2017tightness}. This manifold is known as  \emph{the complex circle manifold} and defined as 
\be  \label{Eq:complexmanifol}                                                                   \mathcal{S}^L \overset{\triangle}{=}\{\mb x \in \mathbb{C}^{L}:|x_l|=1, \ l = 1,2,\hdots,L\}
\ee
Before we proceed with the solution of the optimization problem in (\ref{Eq:P2}), we first provide a background
on optimization over manifolds \cite{absil2009optimization} and subsequently develop
a new technique to optimize directly over the complex circle
manifold.
\subsection{Optimization over manifolds}\label{Optimization_over_manifolds}
The term optimization over manifolds refers to a class of problems of the form
\be \label{man1}
\begin{aligned}
	& \underset{\mathbf{x} \in \mathcal{M}}{\text{min}}
	& & g(\mathbf{x})\\
\end{aligned}
\ee 
where $g(\mathbf{x})$ is a smooth real-valued function and $\mathcal{M}$ is some manifold. Many classical line-search algorithms from unconstrained nonlinear optimization in $\mathbb{C}^L$ such as gradient descent can be used in optimization over manifolds but with some modifications. In general, line-search methods in $\mathbb{C}^L$ are based on the following update formula 
\be \label{SD}\mb x_{(k+1)}=\mb x_{(k)} +\beta_k \boldsymbol{\eta}_{(k)} \ee
where $ \boldsymbol{\eta}_{(k)}\in \mathbb{C}^L$ is the \emph{search direction} and $\beta_k\in \mathbb{R}$ is the \emph{step size}. The most obvious choice for the search direction is the steepest descent direction which is the negative gradient of $g(\mathbf{x})$, i.e., $\boldsymbol{\eta}_{(k)}= -\nabla_{\mb x} g(\mb x_{(k)})$. In the literature \cite{absil2009optimization,kovnatsky2016madmm}, the following high level structure is suggested: 
\begin{itemize}
	\item The descent will be performed on the manifold itself rather than in the Euclidean space by means of the \textit{intrinsic} gradient. The \textit{intrinsic} gradient $\nabla_{\mathcal{M}}g(\mb x_{(k)})$ of $g(\mb x)$ at point $\mb x_{(k)}\in \mathcal{M}$ is a vector
	in the tangent space $\mathcal{T}_{\mb x_{(k)}}\mathcal{M}$ (for the definition of $\mathcal{T}_{\mb x_{(k)}}\mathcal{M}$, see \cite{absil2009optimization}, Section 3.5.7). This \textit{intrinsic} gradient can be obtained by projecting the standard (Euclidean) gradient
	$\nabla_{\mb x} g(\mb x_{(k)})$ onto $\mathcal{T}_{\mb x_{(k)}}\mathcal{M}$ by means of a projection operator $\mathbf{P}_{\mathcal{T}_{\mb x_{(k)}}\mathcal{M}}\big(\nabla_{\mb x} g(\mb x_{(k)})\big)$.
	\item The update is performed on the tangent space along the direction of $\mathbf{P}_{\mathcal{T}_{\mb x_{(k)}}\mathcal{M}}\big(\nabla_{\mb x} g(\mb x_{(k)})\big)$ with a step $\beta$, i.e., $\bar{\mb x}_{(k)}=\mb x_{(k)}- \beta \mathbf{P}_{\mathcal{T}_{\mb x_{(k)}}\mathcal{M}}(\nabla_{\mb x} g(\mb x_{(k)}))$ where $\bar{\mb x}_{(k)} \in \mathcal{T}_{\mb x_{(k)}}\mathcal{M}$.
	\item Since $\bar{\mb x}_{(k)} \notin  \mathcal{M}$, it will be mapped back to the manifold by the means of a \textit{retraction} operator, $\mb x_{(k+1)}=\mathbf{R}\big(\bar{\mb x}_{(k)}\big) $.
\end{itemize}
For some manifolds, the projection $\mathbf{P}_{\mathcal{T}_{\mb x_{(k)}}\mathcal{M}}(.)$
and retraction $\mathbf{R}(.)$ operators admit a closed form. Interested readers may refer to \cite{absil2009optimization} for details.
In the following subsections, we develop new results that employ the ideas articulated above in optimization over manifolds to the complex circle manifold (constant modulus set), i.e., $\mathcal{M}=\mathcal{S}^L$. In particular, 1) we derive expressions for the projection and retraction operators for the complex circle manifold $\mathcal{S}^L$, 2) establish new analytical results which include proof of monotonic cost function improvement while maintaining feasibility in $\mathcal{S}^L$, and 3) provide convergence guarantees. 
\subsection{Complex circle manifold}
The projection and retraction operators for the complex circle manifold are inspired by those for \emph{the unit circle manifold} \cite{absil2009optimization}, which is defined as 
\be                                                                     \mathcal{C}^1 \overset{\triangle}{=}\{\mb q \in \mathbb{R}^{2}: \mb q ^T \mb q =1\}
\ee
That is true because each entry of any feasible vector $\mb x \in \mathcal{S}^L$
can be viewed as a point in $\mathbb{R}^2$ with its components on the (real) unit-circle ($\mathcal{C}^1$). Thus, in order to derive these operators for the complex circle manifold we need to start with the operators for the unit circle manifold. 

First, let us define the following two vectors $\mb z \in \mathcal{S}^L$ and $\mb w \in \mathbb{C}^L$, and let $\bar{\mb z}_l$ and $\bar{\mb w}_l$ be 2-dimensional vectors represent the components of the $l^{\text{th}}$ element of $\mb z$ and $\mb w$ respectively, i.e.,
\begin{itemize}
	\item $\bar{\mb z}_l=[\text{Re}\{z_l\} \ \text{Im}\{z_l\}]^T=[z_{lr} \ z_{li}]^T\in \mathbb{R}^2$ 
	\item $\bar{\mb w}_l=[\text{Re}\{w_l\} \ \text{Im}\{w_l\}]^T=[w_{lr} \ w_{li}]^T\in \mathbb{R}^2$ 
\end{itemize}
where $l = 1,2,\hdots, L$. Note that since the vector $\mb z \in \mathcal{S}^L$, then the vector $\bar{\mb z}_l \in \mathcal{C}^1$, i.e., $\bar{\mb z}_l^T\bar{\mb z}_l=1$. 

The projection of $\bar{\mb w}_l \in \mathbb{R}^{2}$ to the tangent space of the unit circle manifold\footnote{The tangent space for unit circle manifold at $\mb q \in \mathcal{C}^1$ is defined as  $\mathcal{T}_{\mb q}\mathcal{C}^1=\{\mb h \in \mathbb{R}^{2}: \mb q^T \mb h =0\}$\cite{absil2009optimization}.} $\mathcal{T}_{\bar{\mb z}_l}\mathcal{C}^1$ at  $\bar{\mb z}_l$ is given by \cite{absil2009optimization}:
\begin{IEEEeqnarray}{rCl} 
	\text{Proj}_{\mathcal{T}_{\bar{\mb z}_l}\mathcal{C}^1}(\bar{\mb w}_l)
	& = & \bar{\mb w}_l-\bar{\mb z}_l^T \bar{\mb w}_l \bar{\mb z}_l\nonumber\\
	& = & \bar{\mb w}_l-(w_{lr}z_{lr}+w_{li}z_{li})\bar{\mb z}_l\nonumber\\
	& = & \bar{\mb w}_l-\text{Re}\{w_l^{*}z_l\}\bar{\mb z}_l\nonumber\\
	& = &{\begin{bmatrix}
			w_{lr}-\text{Re}\{w_l^{*}z_l\}z_{lr} \\
			w_{li}-\text{Re}\{w_l^{*}z_l\}z_{li} \\
	\end{bmatrix}}\in \mathbb{R}^2 \label{unitcircleproj}
\end{IEEEeqnarray} 
and the retraction of any $\bar{\mb w}_l \in \mathbb{R}^{2}$ to $\mathcal{C}^1$ is given by \cite{absil2009optimization} 
\begin{IEEEeqnarray}{rCl} 
	\text{Ret}(\bar{\mb w}_l)&=& \frac{\bar{\mb w}_l}{\|\bar{\mb w}_l\|_2}\nonumber\\
	&=& 
\Bigg[\frac{w_{lr}}{\sqrt{w_{lr}^2+w_{li}^2}}
\ \frac{w_{li}}{\sqrt{w_{lr}^2+w_{li}^2}}\Bigg]^T\in \mathcal{C}^1\label{unitcircleretraction}
\end{IEEEeqnarray}
These operators for projecting $\bar{\mb w}_l$ onto $\mathcal{T}_{\bar{\mb z}_l}\mathcal{C}^1$ and retracting it to $\mathcal{C}^1$  are illustrated in Fig. \ref{unit_circle_pro_Ret}.
\begin{figure}[!ht]
	\centering
	\includegraphics[scale=0.15]{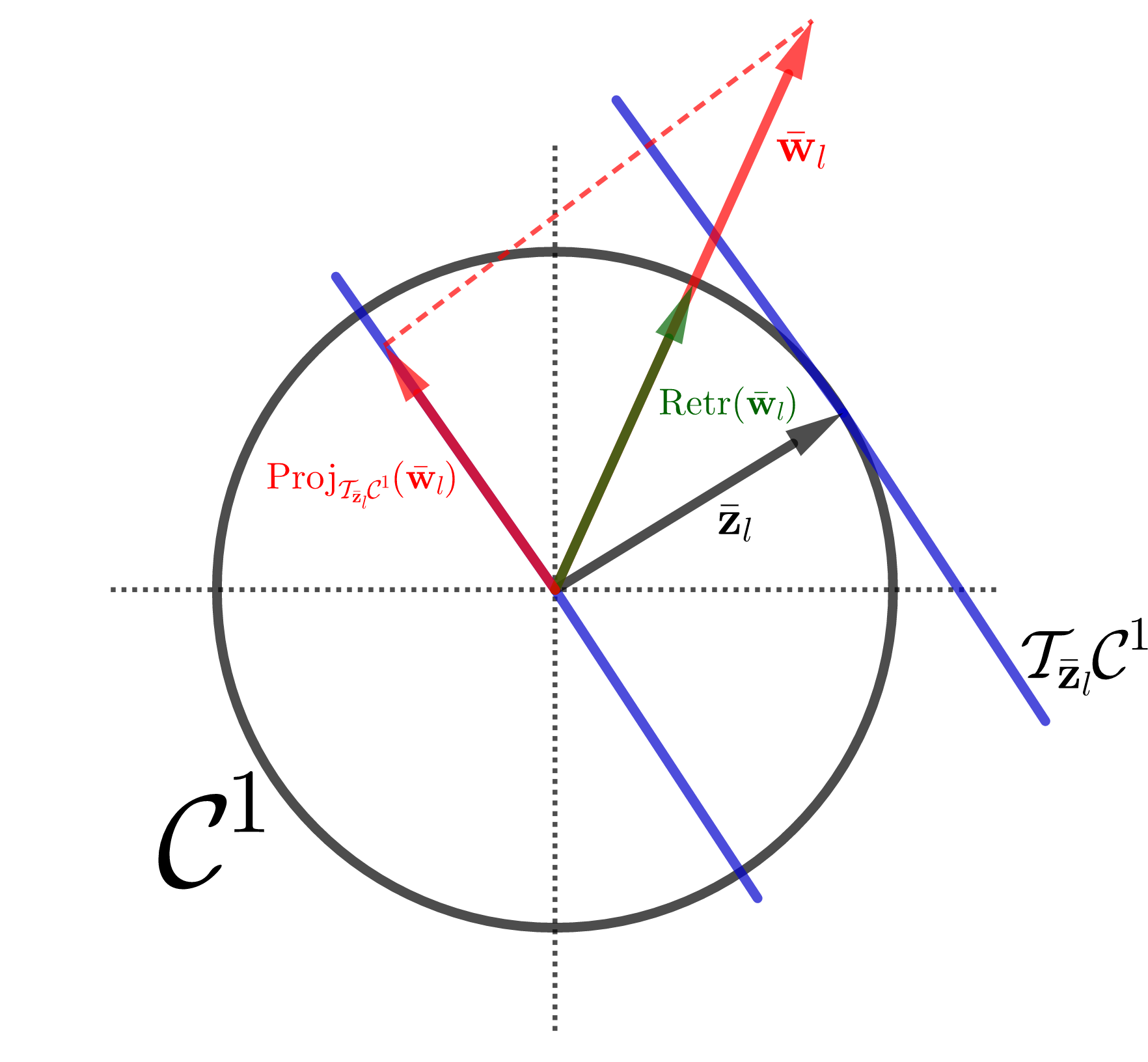}
	\caption{Projection $\text{Proj}_{\bar{\mb z}_l}(\bar{\mb w}_l)$ and retraction $\text{Retr}(\bar{\mb w}_l)$ operators in $\mathcal{C}^1$.}
	\label{unit_circle_pro_Ret}
\end{figure}

According to our discussion at the beginning of this subsection, the relation between the projection of $w_l$ onto the tangent space\footnote{The tangent space of $\mathcal{S}$ at the point $z_l \in \mathcal{S}$ is defined as \cite{bandeira2017tightness} $\mathcal{T}_{z_l}\mathcal{S} = \{y\in \mathbb{C}: \text{Re}\{y^*  z_l\}=0\}$} $\mathcal{T}_{z_l}\mathcal{S}$ and the projection of $\bar{\mb w}_l$ onto the tangent space  $\mathcal{T}_{\bar{\mb z}_l}\mathcal{C}^1$ is similar to the relation between $w_l$ and $\bar{\mb w}_l$ which is $\bar{\mb w}_l=[\text{Re}\{w_l\} \ \text{Im}\{w_l\}]^T$. Given this relation, the projection defined in {Eq (\ref{unitcircleproj})} can be used to project $\mb w$ element-wise onto the tangent space\footnote{This tangent space is the product of $L$ tangent spaces of those for the manifold $\mathcal{S}$, i.e., $\mathcal{T}_{\mb z}\mathcal{S}^L=\mathcal{T}_{z_1}\mathcal{S}\times \mathcal{T}_{z_2}\mathcal{S}\hdots \times \mathcal{T}_{z_L}\mathcal{S}$.} $\mathcal{T}_{\mb z}\mathcal{S}^L$, denoted as $\mathbf{P}_{\mathcal{T}_{\mb z}\mathcal{S}^L}(\mb w)$. Rearranging the entries of the 2-dimensional vector in Eq (\ref{unitcircleproj}) as real and imaginary components for the $l^{\text{th}}$ entry  of  $\mathbf{P}_{\mathcal{T}_{\mb z}\mathcal{S}^L}(\mb w)$ yields
\begin{IEEEeqnarray}{rCl}
	\IEEEeqnarraymulticol{3}{l}{\mathbf{P}_{\mathcal{T}_{\mb z}\mathcal{S}^L}(\mb w)}\nonumber\\* \quad
	&=& {\begin{bmatrix}
			w_{1r}-\text{Re}\{w_1^{*}z_1\}z_{1r}+j(w_{1i}-\text{Re}\{w_1^{*}z_1\}z_{1i}) \nonumber\\
			\vdots\\
			w_{Lr}-\text{Re}\{w_L^{*}z_L\}z_{Lr}+j(w_{Li}-\text{Re}\{w_L^{*}z_L\}z_{Li})
	\end{bmatrix}}\nonumber\\
        &=&{\begin{bmatrix}
        		w_1-\text{Re}\{w_1^{*}z_1\}z_1 \\
        		\vdots\\
        		w_L-\text{Re}\{w_N^{*}z_L\}z_L
     \end{bmatrix}}=\mb w - \text{Re}\{\mb w^{*}\odot\mb z\}\odot \mb z \label{compcir2}
\end{IEEEeqnarray}
Similarly, for the retraction operator in $\mathbb{C}^L$, rearranging the entries of the vector in Eq (\ref{unitcircleretraction}) for each entry  of  $\mathbf{R}(\mb w)$ yields
\begin{IEEEeqnarray}{rCl}
	\mathbf{R}(\mb w)&=&{\begin{bmatrix}
			\frac{w_{1r}}{\sqrt{w_{1r}^2+w_{1i}^2}}+j\frac{w_{1i}}{\sqrt{w_{1r}^2+w_{1i}^2}}\\
			\vdots\\
			\frac{w_{Lr}}{\sqrt{w_{Lr}^2+w_{Li}^2}}+j\frac{w_{Li}}{\sqrt{w_{Lr}^2+w_{Li}^2}}
	\end{bmatrix}}\nonumber\\
&=&{\begin{bmatrix}
		\frac{w_1}{|w_1|}\\
		\vdots\\
		\frac{w_L}{|w_L|}
\end{bmatrix}}=\mb w \odot \frac{1}{|\mb w|}\label{compcir4}
\end{IEEEeqnarray}
\subsection{Projection, Descent and Retraction (PDR) algorithm}\label{PDRsection}
Given the projection in Eq (\ref{compcir2}) and the retraction in Eq (\ref{compcir4}),  the optimization steps over manifolds (described in the subsection \ref{Optimization_over_manifolds}) can be computed to solve the problem in (\ref{man1}) over the complex circle manifold ($\mathcal{M}=\mathcal{S}^L$). 
Precisely, this problem can be solved iteratively by preforming the following steps at each iteration $k$: \begin{enumerate}
	\item A \textbf{P}rojection of the search direction $\boldsymbol{\eta}_{(k)}=-\nabla_{\mb x} g(\mb x_{(k)})$ onto the tangent space $\mathcal{T}_{\mb x_{(k)}}\mathcal{S}^L$ using  Eq (\ref{compcir2}).
	\item A \textbf{D}escent on this tangent space to update the current value of $\mb x_{(k)}$ on the tangent space $\mathcal{T}_{\mb x_{(k)}}\mathcal{S}^L$
	 as
		\ben \bar{\mathbf{x}}_{(k)}=\mathbf{x}_{(k)} +\beta\mathbf{P}_{\mathcal{T}_{\mb x_{(k)}}\mathcal{S}^L}\big(\boldsymbol{\eta}_{(k)}\big)\een
	\item A \textbf{R}etraction of this update to $\mathcal{S}^L$ by using Eq (\ref{compcir4}) as $$\mathbf{x}_{(k+1)}=\mathbf{R}\big(\bar{\mathbf{x}}_{(k)}\big)$$
\end{enumerate}
 Note that $\mathbf{x}_{(k)}, \mathbf{x}_{(k+1)}$ are both on the complex circle manifold, i.e. CMC points, while $\bar{\mathbf{x}}_{(k)}$ is generally a non-CMC point with magnitude $\geq 1$. The proposed algorithm from these steps is named as  \textbf{P}rojection-\textbf{D}escent-\textbf{R}etraction (\textbf{PDR}) and it is visually illustrated in Fig. \ref{fig:complexcircle}.
\begin{figure}[!ht] 
	\centering
	\includegraphics[scale=0.17]{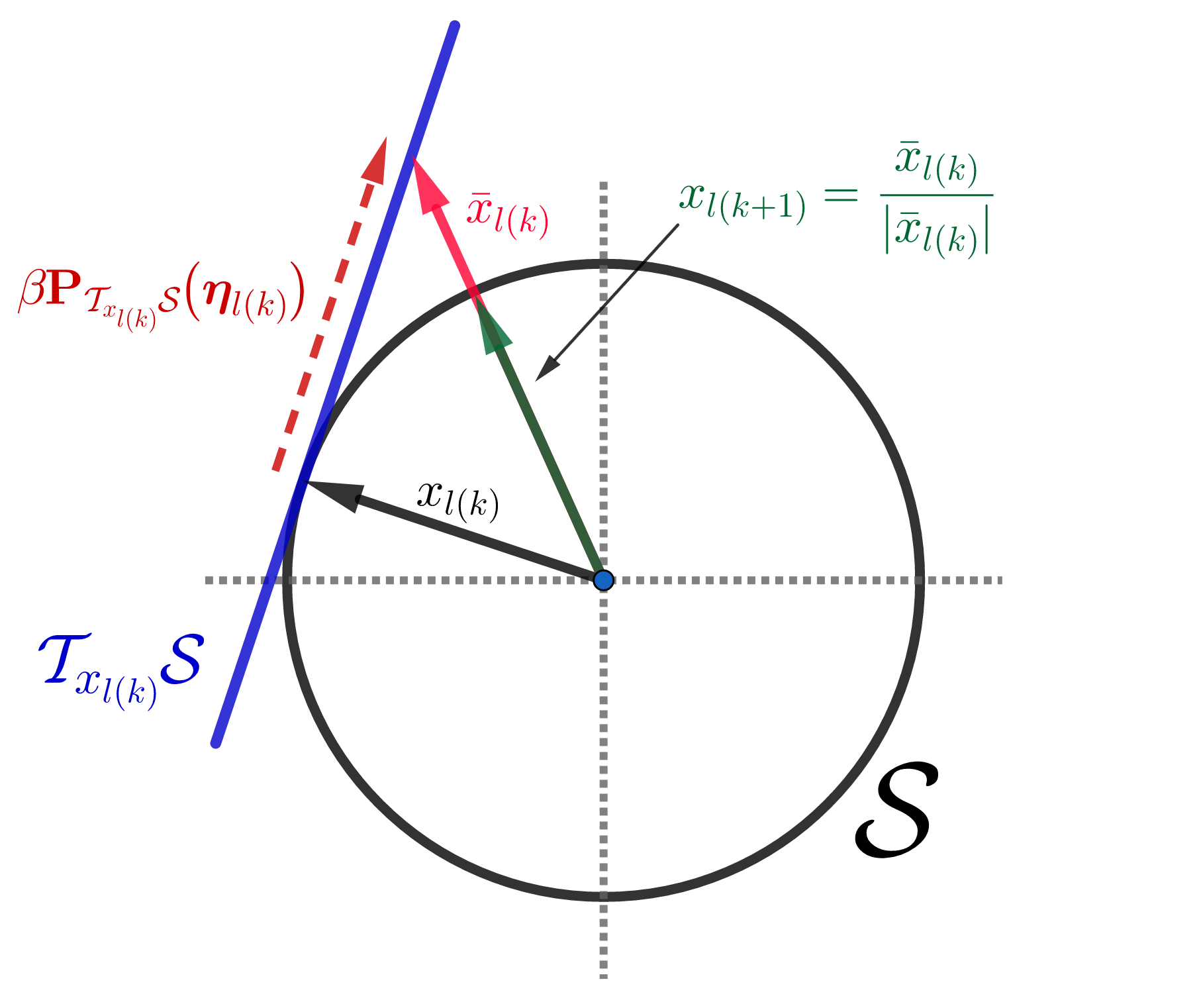}
	\caption{Illustration of the update $x_{l(k+1)}$ starting from $x_{l(k)}$, where $x_{l(k)}$ is the $l$-th element of the complex vector ${\mathbf x}_{(k)}$.}
	\label{fig:complexcircle}                                                                           
\end{figure}

\textbf{PDR for beampattern design:} The cost function of the beampattern design problem defined in (\ref{Eq:P2}) is a quadratic w.r.t. the complex variable $\mb x$ and its  gradient is given by $\nabla_{\mb x} \bar{f}(\mb x) = 2({\mb P} + \gamma {\mb I}) {\mb x} - 2 {\mb q} $. The procedure of minimizing the cost function $\bar{f}(\mb x)$ using the PDR approach is formally described in  Algorithm \ref{Alg:PDR_basic}. 
The convergence of Algorithm \ref{Alg:PDR_basic} is studied in the following sub-section. Assuming that Algorithm \ref{Alg:PDR_basic} converges, the convergence of Algorithm \ref{Alg:PDR} (which alternates between $\phi_{sp}$ and $\mb x$ and achieves practical beampattern design) is guaranteed and already established in past works \cite{sussman1962least,Gerchberg72,he2011wideband,aldayel2017tractable}.

\begin{algorithm}[!ht]
	\caption{Projection-Descent-Retraction (PDR)}
	\label{Alg:PDR_basic}
	\begin{algorithmic}
		\State \tb{Inputs: } The cost function $\bar{f}(\mb x)$, $\mb x_{(0)} \in \mathcal{S}^L$, a step size $\beta$ and a pre-defined threshold value $\epsilon$.
		\State \tb{Output: } A solution $\mb x^{\star}$ for optimizing $\bar{f}(\mb x)$ over the complex circle manifold $\mathcal{S}^L$.
		\State (1) Set $k=0$.
		\State (2) Evaluate the search direction: $\boldsymbol{\eta}_{(k)}=-\nabla_{\mb x} \bar{f}(\mb x_{(k)})$ with computational complexity of $\mathcal{O}(L^2)$.
		\State (3) Compute the projection of the $\boldsymbol{\eta}_{(k)}$ onto the tangent space according to Eq (\ref{compcir2}) as  
		\be \mathbf{P}_{\mathcal{T}_{\mb x_{(k)}}\mathcal{S}^L}\big(\boldsymbol{\eta}_{(k)}\big)=\boldsymbol{\eta}_{(k)} - \text{Re}\{\boldsymbol{\eta}_{(k)}^{*}\odot \mb x_{(k)}\}\odot \mb x_{(k)}
		\ee
		with computational complexity of $\mathcal{O}(L)$.
		\State (4) Compute the update of $\mb x_{(k)}$ on $\mathcal{T}_{\mb x_{(k)}}\mathcal{S}^L$ as
		\be \label{alg1}
		\begin{split}
		 \bar{\mathbf{x}}_{(k)}&=\mathbf{x}_{(k)} +\beta \mathbf{P}_{\mathcal{T}_{\mb x_{(k)}}\mathcal{S}^L}\big(\boldsymbol{\eta}_{(k)}\big)\\
		 &=\mathbf{x}_{(k)} +\beta \big(\boldsymbol{\eta}_{(k)} - \text{Re}\{\boldsymbol{\eta}^*_{(k)}\odot \mb x_{(k)}\}\odot \mb x_{(k)}\Big)
		\end{split}\ee
		\State (5) Compute the next iterate by retracting  $\bar{\mathbf{x}}_{(k)}$ to the complex circle manifold by using the retraction formula Eq (\ref{compcir4}) as
		\be \label{retrac2}\mathbf{x}_{(k+1)}=\mathbf{R}\big(\bar{\mathbf{x}}_{(k)}\big)
		\ee
		with computational complexity of $\mathcal{O}(L)$.
		\If {$|\bar{f}(\mathbf{x}_{(k+1)})-\bar{f}(\mathbf{x}_{(k)})|< \epsilon$}
		\State STOP.
		\Else
		\State k = k+1.
		\State GOTO step (3).
		\EndIf
		\State \tb{Output:} $\mb x^{\star}_{\text{Alg\ref{Alg:PDR_basic}}}= \mb x_{(k+1)}$
	\end{algorithmic}
\end{algorithm}
\textbf{Computational complexity of PDR (Algorithm \ref{Alg:PDR_basic}):} 
As can be inferred from the step-wise description of Algorithm \ref{Alg:PDR_basic}, there are two key steps of complexity $\mathcal{O}(L^2)$ and $\mathcal{O}(L)$ respectively. For large $L$, PDR's complexity per iteration when optimizing quadratic cost functions is dominantly $\mathcal{O}(L^2)$, where $L =NM$.  Table \ref{Complexity} shows computational complexity for the following state-of-the-art approaches along with PDR: Wideband Beampattern Formation via Iterative Techniques (WBFIT) \cite{he2011wideband}, Semi-Definite relaxation with Randomization (SDR) \cite{luo2010semidefinite,cui2014mimo}, Iterative Algorithm for Continuous Phase Case (IA-CPC) \cite{cui2017quadratic} and design algorithm based on Alternating Direction Method of Multipliers (ADMM) \cite{Liang16}, where $F$ is the total number of iterations and $T$ denotes the number of randomization trails for SDR. From this table, it can be seen that  PDR exhibits lower complexity compared to SDR and similar complexity to ADMM and IA-CPC (per iteration). The ADMM and IA-CPC approaches however need more iterations (larger $F$) to achieve the same performance as PDR (as demonstrated in Section \ref{Sec:Experiments}). Note that the complexity of competing methods is reported as derived in their respective/past work.
\begin{table}[H]
	\caption{Computational  complexity for the state of the art algorithms under the constant modulus constraint.} \label{Complexity}
	\centering
	\begin{tabular}{|c|c|}
		\hline
		\textbf{Method}                            & Complexity order                                   \\ \hline
		WBFIT\cite{he2011wideband}                 & $\mathcal{O}(FNM^2)$                               \\ \hline
		ADMM\cite{Liang16}                         & $\mathcal{O}(FN^2M^2)$                             \\ \hline
		SDR\cite{luo2010semidefinite,cui2014mimo}  & $\mathcal{O}(N^{3.5}M^{3.5})+\mathcal{O}(TN^2M^2)$ \\ \hline
		IA-CPC \cite{cui2017quadratic}             & $\mathcal{O}(FN^2M^2)$                             \\ \hline
		PDR                                        & $\mathcal{O}(FN^2M^2)$                             \\ \hline
	\end{tabular}
\end{table}
\begin{algorithm}[!ht]
	\caption{Projection-Descent-Retraction (PDR) for the beampattern design problem}
	\label{Alg:PDR}
	\begin{algorithmic}
		\State \tb{Inputs: }$d_{sp}$, $\mb F_p$, $\mb a_{sp}$ for $p= -\frac{N}{2},...,0, ..., \frac{N}{2}-1$, $s=1, 2, .., S$, $\mb x^{(0)} \in \mathcal{S}^L$, a step size $\beta$ and pre-defined threshold values $\epsilon$ and $\zeta$.
		\State \tb{Output: } A solution $\mb x^{\star}$ for the problem in Eq (\ref{Eq:P}).
		\State (1) Compute $\mb P=\sum_{p}  \mb F^H_p \mb A^H_p\mb A_p \mb F_p$
		\State (2) Set $m=1$.
		\State (3) Set $\phi_{sp}=\text{arg} (\mb a^H_{sp} \mb F_p \mb x^{(m-1)}) \ \forall \ s \ \text{and} \ p$.
		\State (4) Update $\mb d$: $\mb d= [d_{1p} e^{j\phi_{1p}},\hdots , d_{Sp} e^{j\phi_{Sp}}]^T \ \forall \ p$.
		\State (5) Update $\mb q$: $\mb q=\sum_{p}\mb F^H_p \mb A^H_p \mb d_p$.
		\State (6) Use Algorithm \ref{Alg:PDR_basic} with the following inputs: the cost function $\bar{f}(\mb x_{(k)})$ defined in Eq (\ref{Eq:P}), $\mb x_{(0)}=\mb x^{(m-1)}$, $\beta$ and $\epsilon$.	
		\State (7) Set $\mb x^{(m)}= x^{\star}_{\text{Alg\ref{Alg:PDR_basic}}}$.
		\If {$\|\mathbf{x}^{(m)}-\mathbf{x}^{(m-1)}\|< \zeta$}
		\State STOP.
		\Else
		\State m = m+1
		\State GOTO step (3).
		\EndIf
		\State \tb{Output:} $\mb x^{\star}= \mb x^{(m)}$
	\end{algorithmic}
\end{algorithm} 

\subsection{Convergence Analysis}
\label{Subsec:CA}
The update $\bar{\mathbf{x}}_{(k)}=\mathbf{x}_{(k)} +\beta\mathbf{P}_{\mathcal{T}_{\mb x_{(k)}}\mathcal{S}^L}\big(\boldsymbol{\eta}_{(k)}\big)$ in Step 4 of Algorithm \ref{Alg:PDR_basic} will produce a point on the tangent space $\mathcal{T}_{\mb x_{(k)}}\mathcal{S}^L$ with a decrease in the cost if the step size $\beta$ is chosen carefully. A condition on the step size that ensures a decrease in the cost function is provided in the following lemma.
\begin{lemma}
  \label{Lemma:L1}
Let $\lambda_{\mb P+\gamma \mb I}$ denote the largest eigenvalue of the matrix ($\mb P+\gamma \mb I$). If the step size $\beta$ satisfies  \be \label{lemma1} 0<\beta <\frac{1}{\lambda_{\mb P+\gamma \mb I}}\ee then $\bar{f}(\mb x_{(k)})\geq \bar{f}(\bar{\mathbf{x}}_{(k)})$.
  \end{lemma}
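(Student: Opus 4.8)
\emph{Proof plan.} Write $\mb M=\mb P+\gamma\mb I$, let $\lambda=\lambda_{\mb P+\gamma\mb I}$ be its largest eigenvalue, and abbreviate $\mb x=\mb x_{(k)}$, $\mb g=\nabla_{\mb x}\bar f(\mb x_{(k)})=2\mb M\mb x-2\mb q$, and $\mb p=\mathbf{P}_{\mathcal{T}_{\mb x}\mathcal{S}^L}\big(\boldsymbol{\eta}_{(k)}\big)=\mathbf{P}_{\mathcal{T}_{\mb x}\mathcal{S}^L}(-\mb g)$, so that $\bar{\mb x}_{(k)}=\mb x+\beta\mb p$. The plan is to bound the increment $\bar f(\mb x+\beta\mb p)-\bar f(\mb x)$ directly. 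First I would expand the quadratic: since $\bar f(\mb y)=\mb y^H\mb M\mb y-\mb q^H\mb y-\mb y^H\mb q$ with $\mb M$ Hermitian,
\ben
\bar f(\mb x+\beta\mb p)-\bar f(\mb x)=2\beta\,\text{Re}\!\big(\mb p^H(\mb M\mb x-\mb q)\big)+\beta^2\,\mb p^H\mb M\mb p=\beta\,\text{Re}(\mb p^H\mb g)+\beta^2\,\mb p^H\mb M\mb p,
\een
where the last equality uses $\mb M\mb x-\mb q=\tfrac12\mb g$.

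The crux is the linear term $\text{Re}(\mb p^H\mb g)$, and I would handle it by noting that, because $|x_l|=1$ for all $l$, the operator $\mathbf{P}_{\mathcal{T}_{\mb x}\mathcal{S}^L}$ in Eq (\ref{compcir2}) is precisely the \emph{orthogonal} projector onto the tangent space $\mathcal{T}_{\mb x}\mathcal{S}^L$ with respect to the real inner product $\langle\mb u,\mb v\rangle=\text{Re}(\mb u^H\mb v)$ on $\mathbb{C}^L\cong\mathbb{R}^{2L}$: the subtracted term $\text{Re}\{\mb w^{*}\odot\mb x\}\odot\mb x$ is exactly the component of $\mb w$ along the (unit-norm) normal directions $\{x_l\}$, so the operator is self-adjoint and idempotent. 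Hence $\langle\mb p,-\mb g\rangle=\langle\mathbf{P}_{\mathcal{T}_{\mb x}\mathcal{S}^L}(-\mb g),-\mb g\rangle=\langle\mb p,\mb p\rangle$, i.e. $\text{Re}(\mb p^H\mb g)=-\|\mb p\|_2^2$. Equivalently, one can substitute Eq (\ref{compcir2}) directly into $\mb p^H\mb g$ and simplify using $|x_l|=1$ (or use the Pythagorean identity $\|\mb g\|_2^2=\|\mb p\|_2^2+\|\text{Re}\{\mb g^{*}\odot\mb x\}\odot\mb x\|_2^2$). I expect this identity to be the main obstacle of the proof, since it is the only place the constant-modulus structure enters and one must be careful that the relevant projection is orthogonal in the real sense rather than the Hermitian one.

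Given $\text{Re}(\mb p^H\mb g)=-\|\mb p\|_2^2$, the increment becomes $-\beta\|\mb p\|_2^2+\beta^2\,\mb p^H\mb M\mb p$. Since $\mb M$ is Hermitian with largest eigenvalue $\lambda$, the Rayleigh bound gives $\mb p^H\mb M\mb p\le\lambda\|\mb p\|_2^2$, whence
\ben
\bar f(\bar{\mb x}_{(k)})-\bar f(\mb x_{(k)})\le\beta\|\mb p\|_2^2\,(\beta\lambda-1).
\een
Under the hypothesis $0<\beta<1/\lambda$ we have $\beta\lambda-1<0$ while $\beta\|\mb p\|_2^2\ge0$, so the right-hand side is $\le0$; therefore $\bar f(\mb x_{(k)})\ge\bar f(\bar{\mb x}_{(k)})$, which is the claim (with equality exactly when $\mb p=\mb 0$, i.e. when $\mb x_{(k)}$ is already stationary on the manifold).
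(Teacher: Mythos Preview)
Your proof is correct and, at its core, coincides with the paper's: both expand the quadratic $\bar f$ along the tangent step, reduce the linear term using the structure of the projection, and then control the quadratic term via the largest eigenvalue of $\mb P+\gamma\mb I$. In fact your identity $\text{Re}(\mb p^H\mb g)=-\|\mb p\|_2^2$ is exactly what the paper obtains, only packaged differently: the paper rewrites the projector as $\mathbf P_{\mathcal T_{\mb x}\mathcal S^L}(\mb w)=\tfrac12(\mb w-\tilde{\mb w})$ with $\tilde{\mb w}=\text{ddiag}(\mb x\mb x^T)\mb w^*$, uses the identity $\tilde{\mb w}^H\tilde{\mb w}=\mb w^H\mb w$ (which is the algebraic shadow of your ``orthogonal projector in the real inner product'' observation), and arrives at the closed form
\[
\bar f(\mb x_{(k)})-\bar f(\bar{\mb x}_{(k)})=\tfrac{\beta}{4}(\mb w-\tilde{\mb w})^H(\mb I-\beta\mb R)(\mb w-\tilde{\mb w}),
\]
which is precisely your $\beta\|\mb p\|_2^2-\beta^2\mb p^H\mb M\mb p$ after identifying $\mb p=-\tfrac12(\mb w-\tilde{\mb w})$. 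The paper then diagonalizes $\mb R$ to conclude $\mb I-\beta\mb R\succeq 0$ when $\beta\le 1/\lambda_{\mb R}$, whereas you invoke the Rayleigh bound $\mb p^H\mb M\mb p\le\lambda\|\mb p\|_2^2$; these are equivalent. Your route is a touch more conceptual (recognizing the projector as $\langle\cdot,\cdot\rangle_{\mathbb R}$-orthogonal gives the key identity in one line), while the paper's explicit $\tilde{\mb w}$ decomposition yields an exact expression for the decrease before any inequality is applied, which can be handy if one later wants quantitative descent rates.
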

\begin{proof}
See subsection \ref{Subsection:AppA} of the Appendix.
\end{proof}
In the next lemma, we show that the cost function $\bar{f}(\mb x)$  is non-increasing through the retraction step given that the positive scalar $\gamma$ satisfies a certain condition.

\begin{lemma}
	\label{Lemma:L2}
Let $\lambda_{\mb P}$ denote the largest eigenvalue of the matrix $\mb P$. If \be \label{lemma2} \gamma \geq \frac{L}{8}\lambda_{\mb P}+\|\mb q\|_2 \ee then $\bar{f}(\bar{\mathbf{x}}_{(k)})\geq \bar{f}(\mathbf{x}_{(k+1)})$.
\end{lemma}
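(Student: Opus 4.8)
The plan is to track exactly what the retraction does to each coordinate and show the quadratic can only go down. Write $\mb x^{+}:=\mb x_{(k+1)}=\mathbf{R}(\bar{\mb x}_{(k)})$. By the retraction formula Eq (\ref{compcir4}), $\mb x^{+}$ has, in every coordinate, the same argument as $\bar{\mb x}_{(k)}$ with the modulus rescaled to $1$; hence $\bar{\mb x}_{(k)}=\boldsymbol{\rho}\odot\mb x^{+}$ with $\rho_l=|\bar x_{l(k)}|$. Moreover $\bar{\mb x}_{(k)}=\mb x_{(k)}+\beta\,\mathbf{P}_{\mathcal{T}_{\mb x_{(k)}}\mathcal{S}^L}(\boldsymbol{\eta}_{(k)})$ adds, to the unit-modulus point $\mb x_{(k)}$, a vector whose $l$-th entry $h_l$ lies in $\mathcal{T}_{x_{l(k)}}\mathcal{S}$, i.e.\ satisfies $\text{Re}\{h_l^{*}x_{l(k)}\}=0$, so the Pythagorean relation gives $\rho_l^{2}=1+\beta^{2}|h_l|^{2}\geq 1$. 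Setting $\mb e:=\bar{\mb x}_{(k)}-\mb x^{+}$ we obtain $e_l=(\rho_l-1)x^{+}_l$, $|e_l|=\rho_l-1\geq 0$, and --- the structural fact the whole argument rests on --- $\text{Re}\{\mb e^{H}\mb x^{+}\}=\sum_l(\rho_l-1)=\|\mb e\|_1\geq\|\mb e\|_2$.

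I would then substitute $\bar{\mb x}_{(k)}=\mb x^{+}+\mb e$ into $\bar f(\mb x)=\mb x^{H}(\mb P+\gamma\mb I)\mb x-2\,\text{Re}\{\mb q^{H}\mb x\}$ and collect terms, using $\mb P=\mb P^{H}$ and the fact that $\mb P=\sum_p(\mb A_p\mb F_p)^{H}(\mb A_p\mb F_p)$ is positive semidefinite:
\ben
\bar f(\bar{\mb x}_{(k)})-\bar f(\mb x^{+})=\underbrace{\mb e^{H}\mb P\mb e+\gamma\|\mb e\|_2^{2}}_{\geq 0}+2\,\text{Re}\{\mb e^{H}\mb P\mb x^{+}\}+2\gamma\|\mb e\|_1-2\,\text{Re}\{\mb q^{H}\mb e\},
\een
where $\text{Re}\{\mb e^{H}\mb x^{+}\}=\|\mb e\|_1$ has been used. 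Two indefinite terms remain. Cauchy--Schwarz gives $|2\,\text{Re}\{\mb q^{H}\mb e\}|\leq 2\|\mb q\|_2\|\mb e\|_2$ and $|2\,\text{Re}\{\mb e^{H}\mb P\mb x^{+}\}|\leq 2\|\mb P\mb x^{+}\|_2\,\|\mb e\|_2\leq 2\lambda_{\mb P}\sqrt{L}\,\|\mb e\|_2$. Downgrading the positive term via $\|\mb e\|_1\geq\|\mb e\|_2$,
\ben
\bar f(\bar{\mb x}_{(k)})-\bar f(\mb x^{+})\ \geq\ 2\big(\gamma-\lambda_{\mb P}\sqrt{L}-\|\mb q\|_2\big)\|\mb e\|_2 ,
\een
so $\gamma\geq\lambda_{\mb P}\sqrt{L}+\|\mb q\|_2$ already suffices; the threshold (\ref{lemma2}) has the same shape (a multiple of $\lambda_{\mb P}$ plus $\|\mb q\|_2$), corresponds to a coarser bound on $\|\mb P\mb x^{+}\|_2$, and in particular implies $\gamma\geq\lambda_{\mb P}\sqrt{L}+\|\mb q\|_2$ whenever $L\geq 64$. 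No restriction on $\beta$ enters, so this composes with Lemma \ref{Lemma:L1} to yield the overall monotone decrease $\bar f(\mb x_{(k)})\geq\bar f(\mb x_{(k+1)})$.

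The step I expect to be the obstacle is controlling the cross term $2\,\text{Re}\{\mb e^{H}\mb P\mb x^{+}\}$. A crude estimate such as $|2\,\text{Re}\{\mb e^{H}\mb P\mb x^{+}\}|\leq 2\lambda_{\mb P}L$ --- a constant, independent of $\mb e$ --- is useless, since the only positive quantity available to cancel it, $\gamma\|\mb e\|_2^{2}$, is quadratically small when $\bar{\mb x}_{(k)}$ is close to $\mathcal{S}^L$; one must therefore bound this cross term \emph{linearly} in $\|\mb e\|$ and pit it against a positive contribution that is also linear in $\|\mb e\|$. That positive contribution is $2\gamma\,\text{Re}\{\mb e^{H}\mb x^{+}\}=2\gamma\|\mb e\|_1$, which is present only because the retraction moves $\bar{\mb x}_{(k)}$ to $\mb x^{+}$ purely \emph{radially} in each coordinate; the inequality $\|\mb e\|_1\geq\|\mb e\|_2$ then closes the gap, and the precise constant in (\ref{lemma2}) merely reflects how loosely one bounds $\|\mb P\mb x^{+}\|_2$ in terms of $\lambda_{\mb P}$. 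Everything else --- the quadratic expansion, positive semidefiniteness of $\mb P$, and the two Cauchy--Schwarz estimates --- is routine bookkeeping.
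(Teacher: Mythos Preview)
Your decomposition and the key structural identity $\text{Re}\{\mb e^{H}\mb x^{+}\}=\|\mb e\|_1$ are exactly what the paper uses (there $\mb e=\Psi\mb x_{(k+1)}$ with $\Psi=\text{diag}(\rho_1-1,\ldots,\rho_L-1)$), as are the dropping of the positive semidefinite term $\mb e^{H}(\mb P+\gamma\mb I)\mb e$ and the Cauchy--Schwarz estimate on the $\mb q$ term. The only substantive difference is in how the cross term is handled: you bound $|2\,\text{Re}\{\mb e^{H}\mb P\mb x^{+}\}|\leq 2\lambda_{\mb P}\sqrt{L}\,\|\mb e\|_2$ by Cauchy--Schwarz, whereas the paper appeals to the matrix inequality $\Psi\mb P+\mb P\Psi\geq -\tfrac{1}{4}\lambda_{\Psi}\lambda_{\mb P}\mb I$ (Theorem~7.5 in Zhang's \emph{Matrix Theory}) to get $\mb x^{+H}(\Psi\mb P+\mb P\Psi)\mb x^{+}\geq -\tfrac{L}{4}\lambda_{\mb P}\|\mb e\|_\infty\geq -\tfrac{L}{4}\lambda_{\mb P}\|\mb e\|_1$, which after balancing against $2\gamma\|\mb e\|_1$ produces precisely the factor $L/8$ in (\ref{lemma2}).

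Your route is more elementary (no external matrix theorem) and, as you observe, yields the sharper threshold $\sqrt{L}\,\lambda_{\mb P}+\|\mb q\|_2$ once $L>64$. The caveat is that for $L<64$ your threshold is \emph{larger} than the paper's, so your argument does not prove the lemma \emph{as stated} in that regime; your remark that (\ref{lemma2}) ``corresponds to a coarser bound on $\|\mb P\mb x^{+}\|_2$'' is not how the paper obtains $L/8$. If the exact constant is required, replace your Cauchy--Schwarz step on the cross term by the matrix inequality above; if only monotone decrease is at stake, your bound is entirely adequate and in the practically relevant regime $L=MN\geq 64$ actually tighter.
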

\begin{proof}
See subsection \ref{Subsection:AppB} of the Appendix.
\end{proof}
In the following lemma, we show that the original cost function $f(\mb x)$  defined in (\ref{Eq:P}) is non-increasing and the iterative procedure converges.
\begin{lemma}
	\label{Lemma:L3}
Given $\gamma \geq \frac{L}{8}\lambda_{\mb P}+\|\mb q\|_2$ and $0<\beta < 1/{\lambda_{\mb P+\gamma \mb I}}$ the sequence $\{\bar{f}(\mathbf{x}_{(k)})\}_{k=0}^{\infty}$ generated by Algorithm \ref{Alg:PDR_basic} is non-increasing (from Lemmas \ref{Lemma:L1} and \ref{Lemma:L2}) and hence the sequence $\{f(\mathbf{x}_{(k)})\}_{k=0}^{\infty}$ is also non-increasing. Moreover, since $f(\mathbf{x})\geq 0$ (sum of norms), $\forall \ \mb x$ , it is bounded below and converges to a finite value $f^{*}$.
\end{lemma}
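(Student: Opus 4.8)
The plan is to chain Lemmas~\ref{Lemma:L1} and~\ref{Lemma:L2} into a single per-iteration inequality for $\bar f$, transfer that inequality from $\bar f$ to $f$ using that the iterates remain on $\mathcal{S}^L$, and then close the argument with the monotone convergence theorem. First I would observe that the two hypotheses $\gamma \geq \frac{L}{8}\lambda_{\mb P}+\|\mb q\|_2$ and $0<\beta<1/\lambda_{\mb P+\gamma\mb I}$ are exactly the conditions invoked by Lemmas~\ref{Lemma:L2} and~\ref{Lemma:L1} respectively, and that they are jointly attainable (choose $\gamma$ large enough first, then $\beta$ accordingly). Under these conditions, for every $k$ the descent step gives $\bar f(\mb x_{(k)}) \geq \bar f(\bar{\mb x}_{(k)})$ by Lemma~\ref{Lemma:L1}, and the retraction step gives $\bar f(\bar{\mb x}_{(k)}) \geq \bar f(\mb x_{(k+1)})$ by Lemma~\ref{Lemma:L2}; composing these yields $\bar f(\mb x_{(k)}) \geq \bar f(\mb x_{(k+1)})$, so the sequence $\{\bar f(\mb x_{(k)})\}_{k\geq 0}$ is non-increasing.

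Next I would convert this into a statement about the original cost $f$. Since the input satisfies $\mb x_{(0)}\in\mathcal{S}^L$ and the retraction~(\ref{compcir4}) always returns a CMC point, every iterate produced by Algorithm~\ref{Alg:PDR_basic} satisfies $\mb x_{(k)}^H\mb x_{(k)}=L$. Hence $\bar f(\mb x_{(k)}) = f(\mb x_{(k)}) + \gamma\,\mb x_{(k)}^H\mb x_{(k)} - r = f(\mb x_{(k)}) + \gamma L - r$, i.e.\ $\bar f$ and $f$ differ only by the constant $\gamma L - r$ along the sequence, so $\{f(\mb x_{(k)})\}_{k\geq 0}$ inherits the non-increasing property. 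The one point that needs care is precisely this bookkeeping: the intermediate point $\bar{\mb x}_{(k)}$ lies off the manifold (magnitude $\geq 1$), so the identity $\bar f = f + \gamma L - r$ cannot be applied there; the entire monotonicity chain must be run in terms of $\bar f$ and converted back to $f$ only at the genuine iterates $\mb x_{(k)},\mb x_{(k+1)}\in\mathcal{S}^L$.

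Finally, from Eq~(\ref{Eq:cost}) we have $f(\mb x)=\sum_{p}\|\mb d_p-\mb A_p\mb F_p\mb x\|_2^2\geq 0$ for every $\mb x$, so $\{f(\mb x_{(k)})\}_{k\geq 0}$ is bounded below by $0$. A non-increasing real sequence that is bounded below converges, so $f(\mb x_{(k)})\to f^*$ for some finite $f^*\geq 0$, which is the assertion. I do not expect a genuine obstacle in this lemma: all the analytic content lives inside Lemmas~\ref{Lemma:L1} and~\ref{Lemma:L2}, and what remains is the composition of inequalities, the constant-modulus identity $\mb x_{(k)}^H\mb x_{(k)}=L$, and the monotone convergence theorem.
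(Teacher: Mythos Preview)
Your proposal is correct and follows essentially the same route as the paper: chain Lemmas~\ref{Lemma:L1} and~\ref{Lemma:L2} to get $\bar f(\mb x_{(k)})\geq\bar f(\mb x_{(k+1)})$, use $\mb x_{(k)}^H\mb x_{(k)}=L$ on the manifold to transfer monotonicity to $f$, and conclude by monotone convergence since $f\geq 0$. Your additional remarks (the explicit constant $\gamma L-r$ and the caveat that the identity only holds at the CMC iterates, not at $\bar{\mb x}_{(k)}$) are accurate refinements but do not change the argument.
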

\begin{proof}
	See subsection \ref{Subsection:AppC} of the Appendix.
\end{proof}

\noindent \textbf{Remark:\underline{}} The proposed PDR has essentially enabled a gradient based update while maintaining feasibility on the (non-convex) complex circle manifold with guarantees of monotonic cost function decrease and convergence. We note that the guarantees provided here may not necessarily generalize to other non-convex manifolds. It is indeed structure specific to this problem that enables our construction as shown in Fig. \ref{fig:complexcircle}.
\subsection{Orthogonal waveform design across antennas}
\label{Sec:PDR_Orth}
The feasible set of the optimization problem that incorporating CMC and orthogonality to design the beampattern can be understood as the intersection of the aforementioned complex circle manifold and the complex Stiefel manifold \cite{absil2009optimization,wen2013feasible,jiang2015framework}. Working directly on the intersection of these manifolds is difficult (or impossible) because  the intersection of two manifolds is not always a manifold, and even when it is, it may not be easy to describe. Our strategy to deal with this problem is to  optimize over the complex circle manifold while modifying the beampattern design cost function with the addition of a penalty term that emphasizes orthogonality. Specifically, the cost function in (\ref{Eq:P2}) can be altered by adding the following penalty term $\alpha\|\mathbf{X}^H\mathbf{X}-N\mathbf{I}_M\|^2_F, \ \alpha > 0$, where $\mb X \in \mathbb{C}^{N\times M}$ is the transmit waveform matrix 
$\mb X=[\mb x_0 \quad \mb x_1 \quad ... \quad \mb x_{M-1}]$ and related to the vector $\mb x$ defined in Eq (\ref{Eq:vecx}) through the \emph{vectorization} operator, i.e., $\mb x = \text{vec} (\mb X)$. This way, $\mathbf{X}$ will be ``encouraged" to be orthogonal/unitary, and with this penalty term, the optimization problem that assimilates both constraints can be written as
\be \label{Eq:penalty1}
\begin{array}{cc}
	\displaystyle
	\min_{\mathbf{x} \in \mathbb{C}^{L}} &	\begin{split}h(\mb x) = & \mb x^H \mb P\mb x - \mb q^H \mb x - \mb x^H \mb q+r\\&+\alpha \|\mathbf{X}^H\mathbf{X}-N\mathbf{I}_M\|^2_F \end{split}\\
	\text{s.t.:  } & |\mb x|= \mb 1\\
\end{array}
\ee

The gradient of the cost function $h(\mb x)$ w.r.t. $\mb x$ can be computed by utilizing the relation between the variables $\mb x$ and $\mb X$. The penalty term is a scalar function of the matrix $\mb X$, and the gradient of a scalar function with respect to $\mb X$ and $\mb x$ can be also related through the \emph{vec} operator \cite{fackler2005notes}, i.e.,
 \ben \label{Eq:gradrelation}
 \begin{split} 
 	\nabla_{\mb x} \big(\|\mathbf{X}^H\mathbf{X}-N\mathbf{I}_M\|^2_F\big)=\text{vec}\Big(\nabla_{\mb X} \big(\|\mathbf{X}^H\mathbf{X}-N\mathbf{I}_M\|^2_F\big)\Big)
 \end{split}
 \een
 using this relation, the gradient of the penalty term w.r.t. $\mb x$ is given by
 \ben\label{gradvec2}
 \begin{split} 
 	\nabla_{\mb x} \big(\|\mathbf{X}^H\mathbf{X}-N\mathbf{I}_M\|^2_F\big)&=\text{vec}\Big(\nabla_{\mb X} \big(\|\mathbf{X}^H\mathbf{X}-N\mathbf{I}_M\|^2_F\big)\Big)\\&= 4\  \text{vec}(\mb X \mb X^H \mb X)
 \end{split}
 \een 
 and hence the gradient of the cost function $h(\mb x)$ is given by
 \be \label{Eq:Gradpenalty}
 \begin{split}
 	\nabla_{\mb x} h(\mb x) = 2({\mb P} + \gamma {\mb I}) {\mb x} - 2 {\mb q} + 4 \alpha \text{vec}(\mb X \mb X^H \mb X)
 \end{split}
 \ee
The optimization problem in (\ref{Eq:penalty1}) can now be solved by executing Algorithm \ref{Alg:PDR_basic} but with a different cost function, i.e. $h(\mathbf{x})$. As before, practical beampattern design in the presence of CMC while encouraging orthogonality can be obtained by Algorithm 2, which invokes Algorithm \ref{Alg:PDR_basic} in Step 6.
\begin{figure*}
	\centering
	\subfloat[Unconstrainted]{\includegraphics[width=0.49\linewidth]{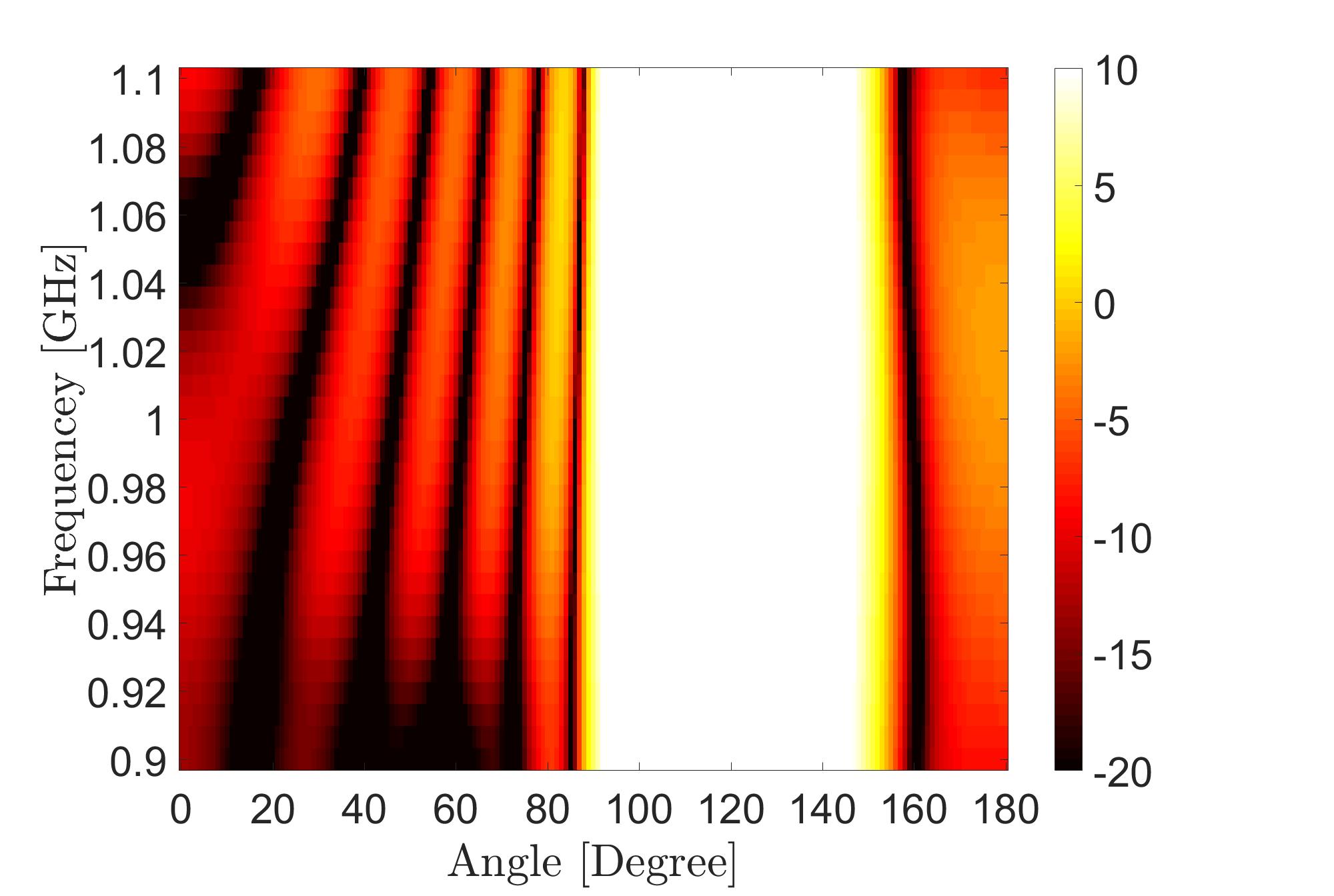}\label{desired1}}\quad
	\subfloat[WBFIT\cite{he2011wideband}]{\includegraphics[width=0.49\linewidth]{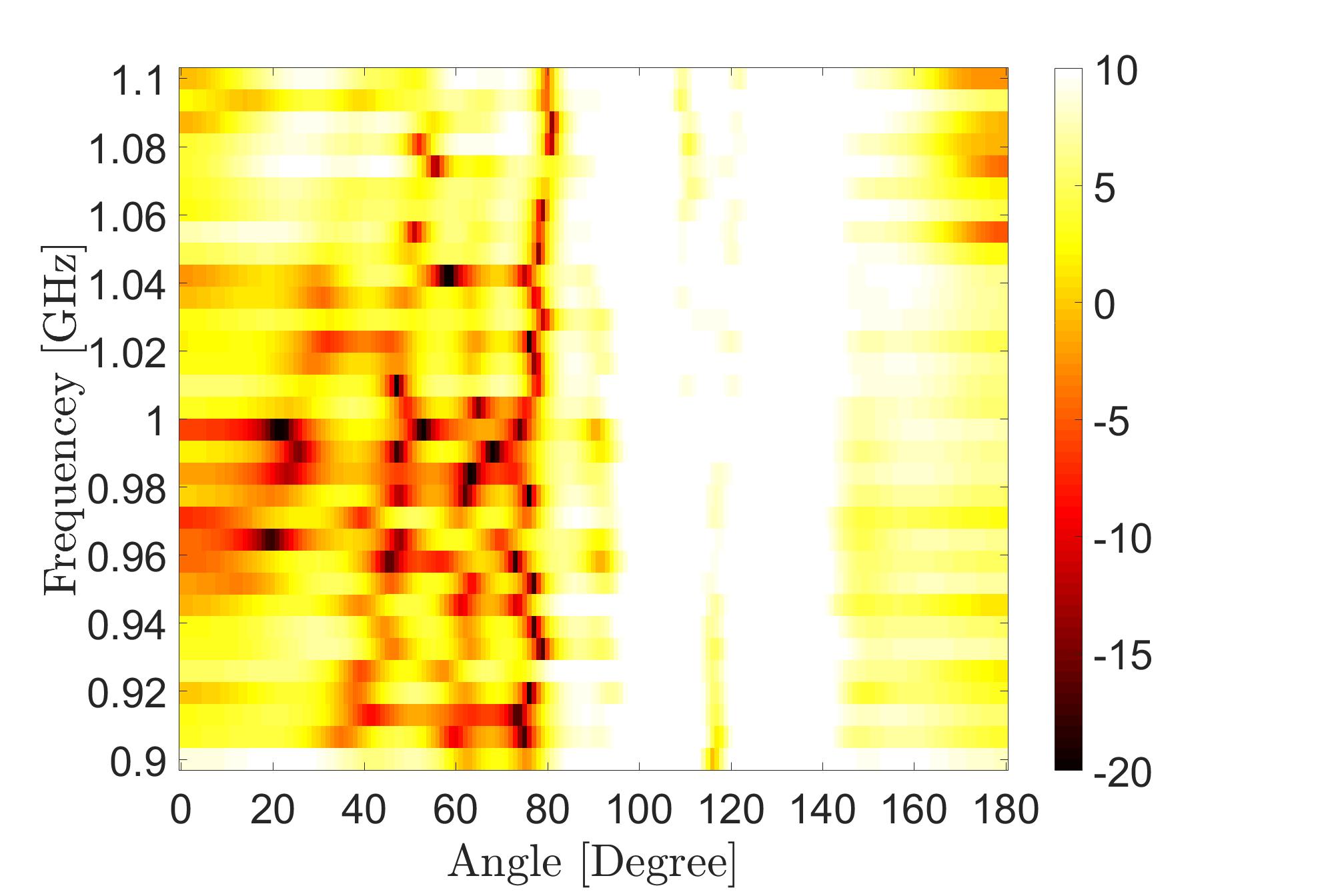}}\\
	\subfloat[SDR\cite{luo2010semidefinite,cui2014mimo}]{\includegraphics[width=0.49\linewidth]{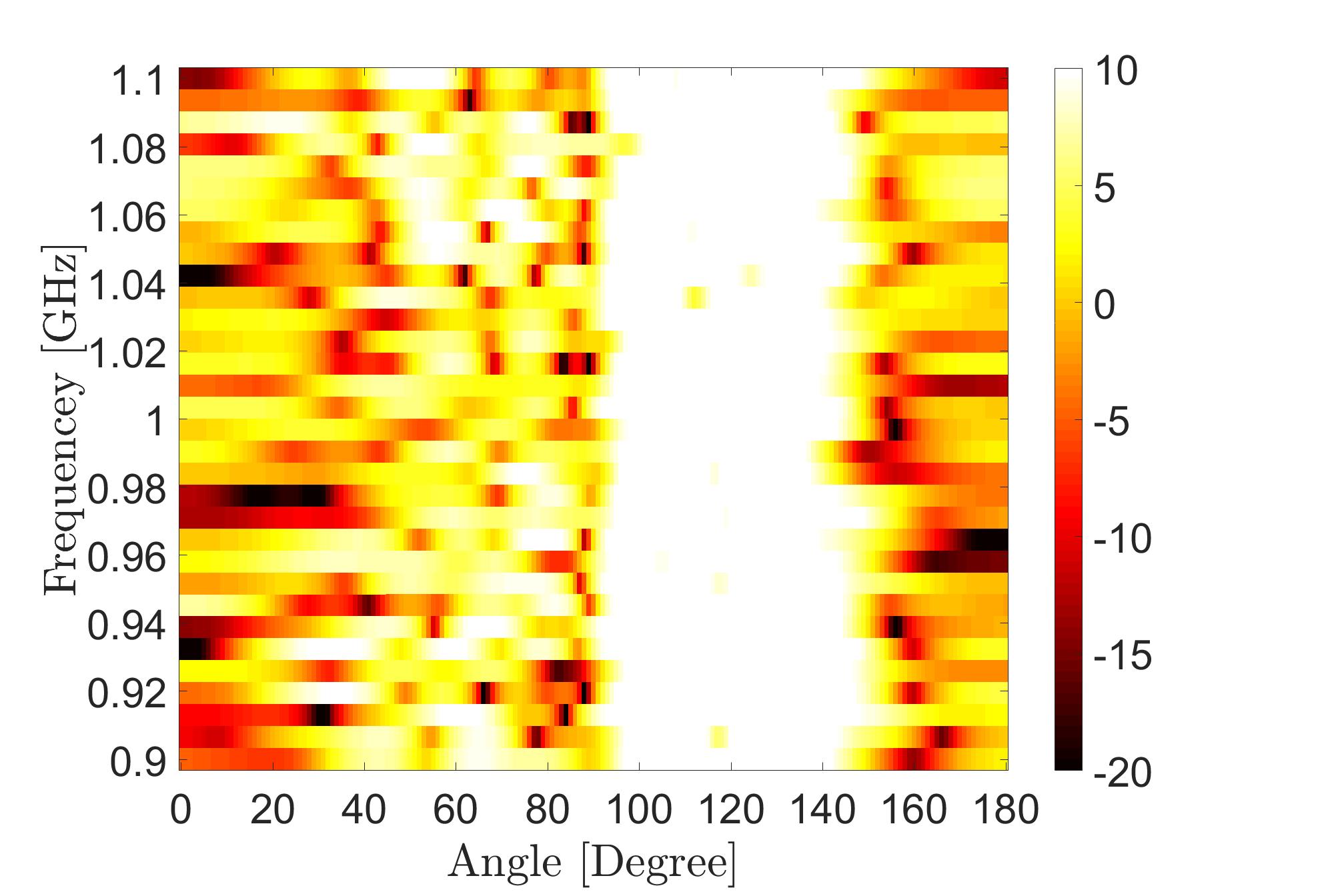}}\quad
	\subfloat[IA-CPC \cite{cui2017quadratic}]{\includegraphics[width=0.49\linewidth]{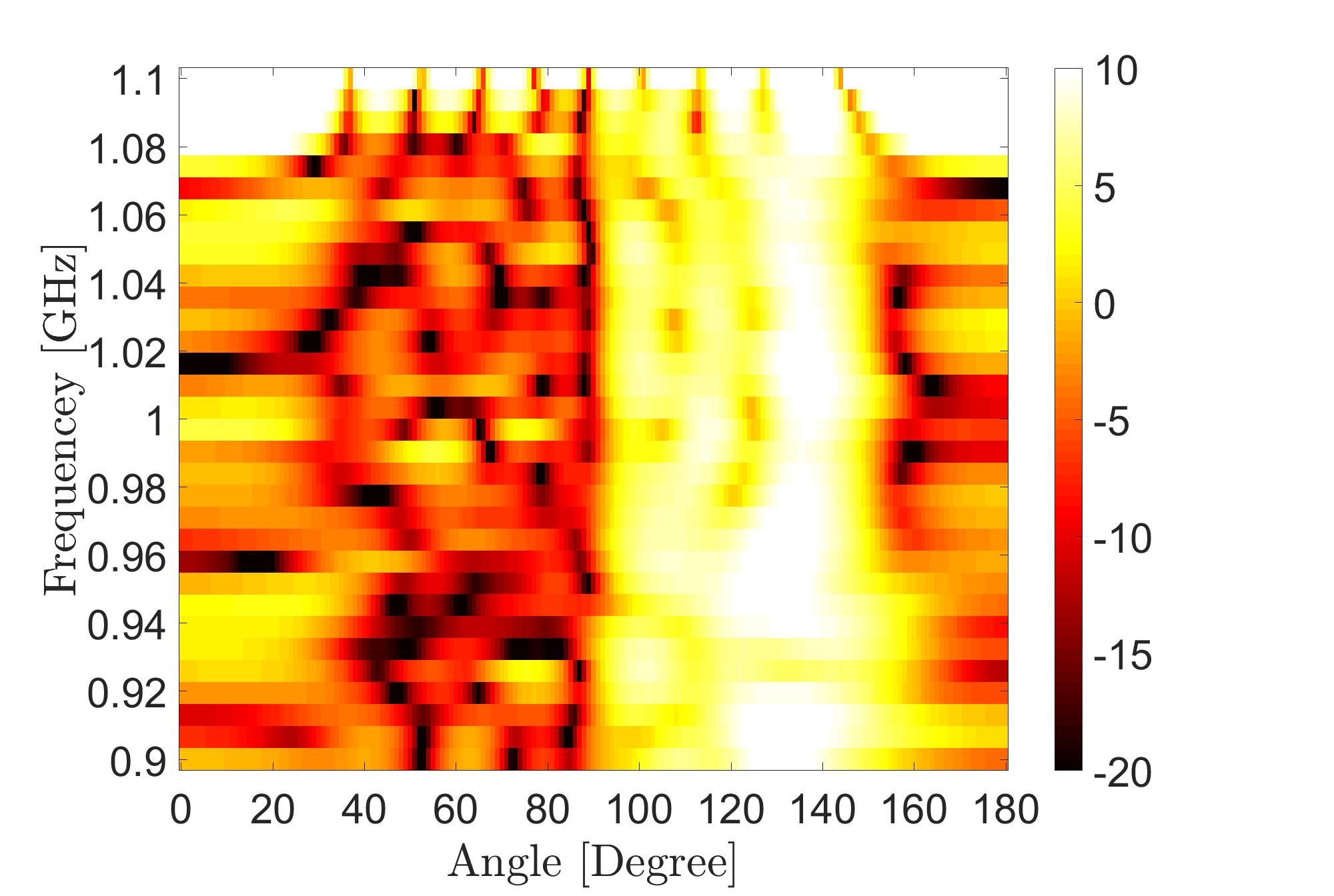}}\\
	\subfloat[ADMM\cite{Liang16}]{\includegraphics[width=0.49\linewidth]{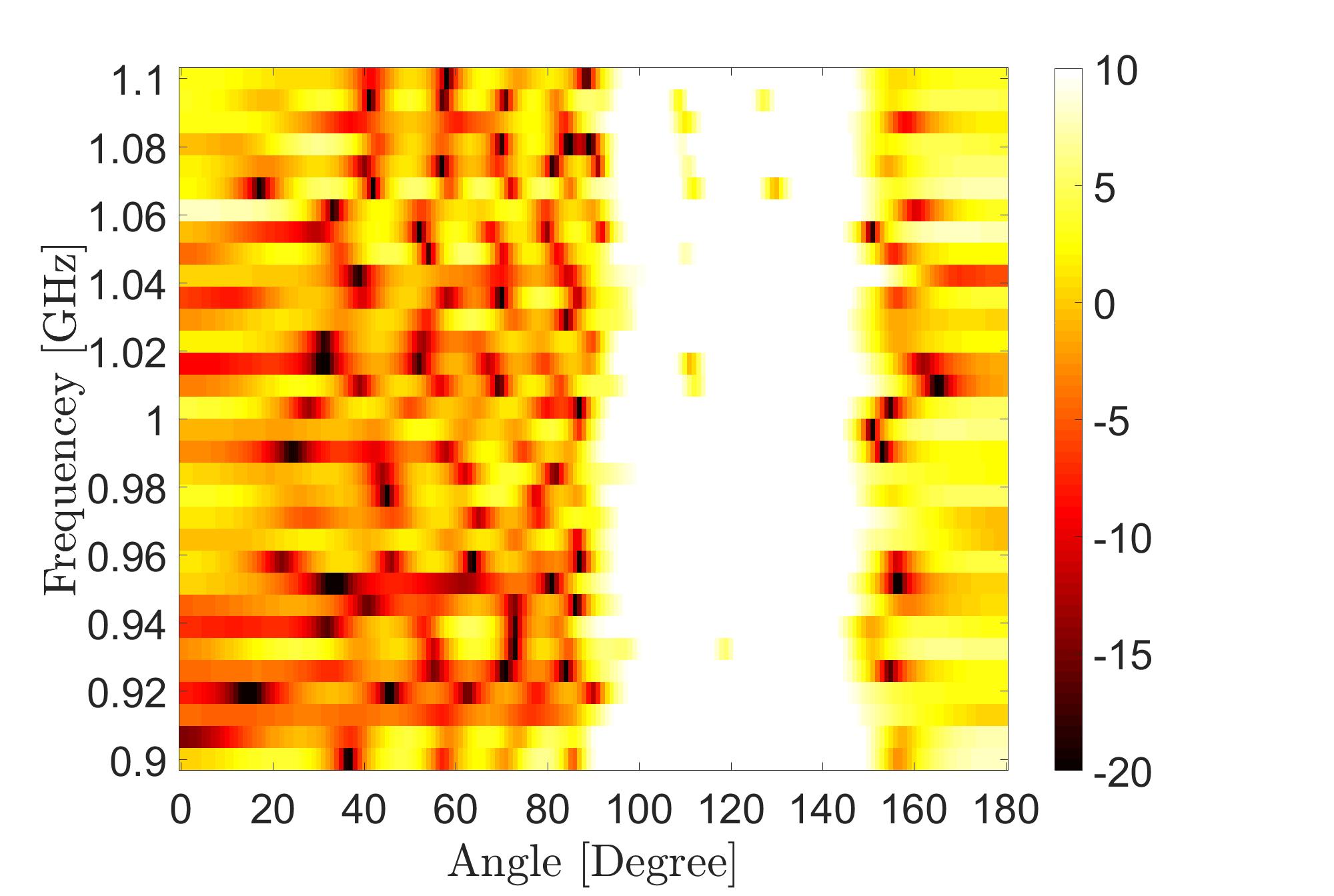}}\quad
	\subfloat[PDR]{\includegraphics[width=0.49\linewidth]{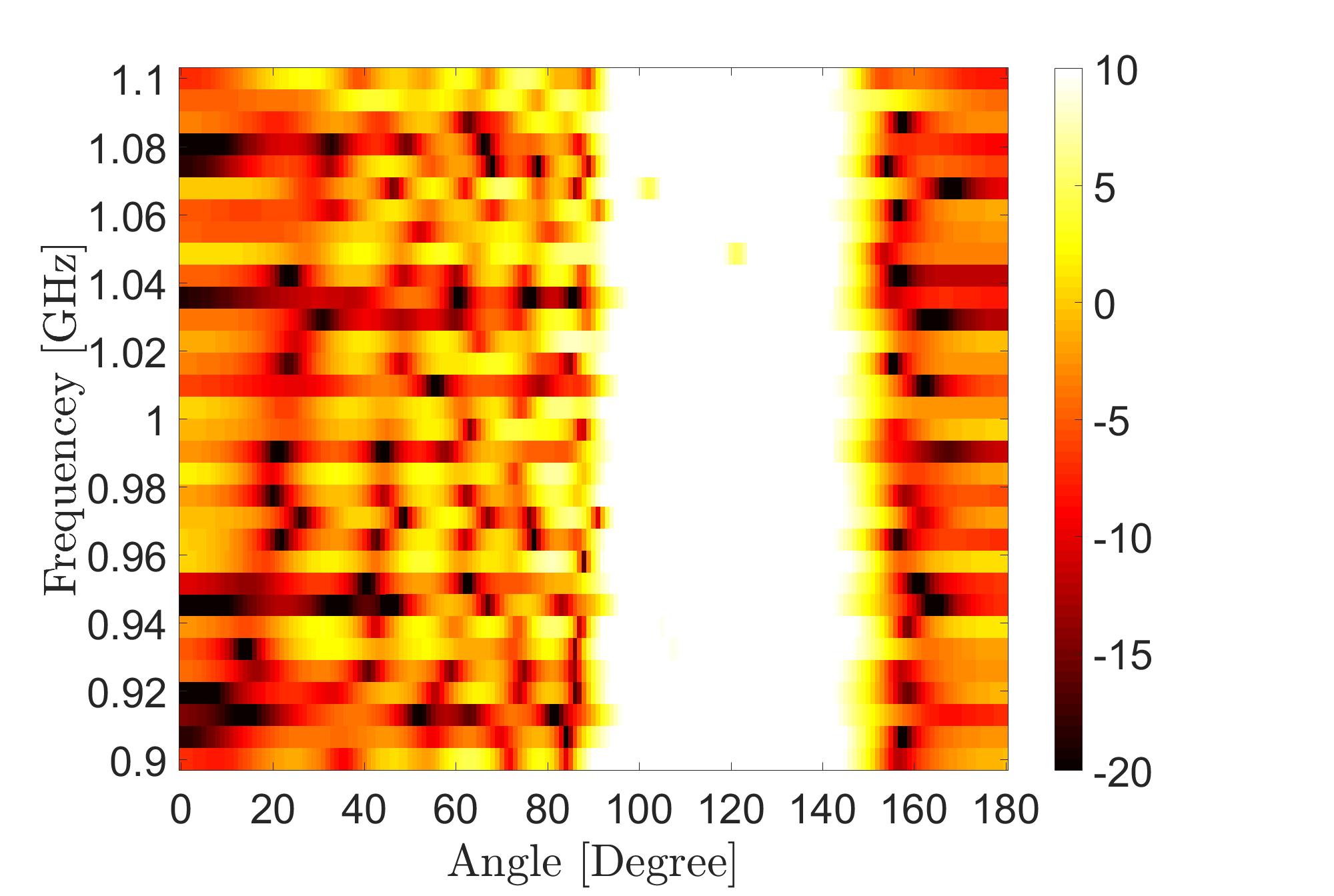}}
	\caption{The beampattern for Case 1 obtained by (a) the unconstrained design, (b) WBFIT, (c) SDR, (d) IA-CPC, (e) ADMM, and (f) PDR.} \label{Fig:BPcase1}
\end{figure*}
\section{Numerical Results} 
\label{Sec:Experiments} 

Various numerical simulations are provided to assess the performance of PDR based beampattern design and compare it to  state-of-the-art approaches. Results from the following three simulations are reported next: 1) beampattern design under the constant modulus constraint, 2) beampattern design under both constant modulus and orthogonality constraints, and 3) an investigation to examine the robustness of the produced waveforms under the two constraints to direction mismatch.

Unless otherwise specified, the following settings are used in this section. Consistent with past work \cite{he2011wideband,aldayel2017tractable}, we assume a ULA MIMO radar system with the following parameters: the number of transmit antennas $M=10$, the number of time samples $N=32$, carrier frequency $f_c=1$ GHz, bandwidth $B=200$ MHz, sampling rate $T_s=1/B$, inter-element spacing $d = c/2(f_c+B/2)$, and the spatial angle $\theta$ is divided into $S=180$ points.  
\subsection{Beampattern design under CMC}
\label{SubSec:Experiments1} 
We examine the beampattern design problem under CMC and compare the performance of PDR to the following state-of-the-art approaches: 1) Wideband Beampattern Formation via Iterative Techniques (WBFIT) \cite{he2011wideband},  2) Semi-Definite relaxation with Randomization (SDR) \cite{luo2010semidefinite,cui2014mimo}, 3)  Iterative Algorithm for Continuous Phase Case (IA-CPC) \cite{cui2017quadratic}, and 4) Design algorithm based on Alternating
Direction Method of Multipliers (ADMM) \cite{Liang16}. Similar to our work, SDR and IA-CPC are devised as approaches that optimize quadratic cost functions while enforcing CMC. That is, IA-CPC and SDR will be applied to optimize the same cost function that PDR also addresses. Finally, we also report results for the case where no constraints are posed on the waveform code $\mb x$. This {\em unconstrained design} is impractical but provides a bound on the performance of all constrained methods. For a fair comparison, PDR and the competing methods are initialized with the same waveform; a pseudo-random vector of unit magnitude complex entries.

We consider three distinct specifications of the desired beampattern. Case 1 (based on \cite{he2011wideband}) only has angular dependence and has been specified to uniformly illuminate a broadside region. Case 2 (based on \cite{aldayel2017tractable}) has both angle and frequency dependence. Case 3 has more specifications such as restricting the transmission to be in a certain frequency band for spectrally crowded scenarios \cite{rowe2014spectrally,kang2018spatio}. The step size for PDR  was chosen as $\beta = 0.00005$, $\beta  = 0.00004$, and $\beta  = 0.00004$ for Case 1, Case 2, and Case 3, respectively. The parameters for all competing methods were set as prescribed in their respective papers or by using code given directly by the authors.

	\textbf{Case 1:} The desired beampattern is given by 
	\begin{equation}\label{Scenario1}
	d(\theta,f)=
	\begin{cases}
	1 & \theta=[95^{\circ},145^{\circ}]\\
	0 & \text{Otherwise}.
	\end{cases}
	\end{equation} 
\begin{table}[H]
	\caption{The deviation from the desired beampattern (the cost function in Eq (\ref{Eq:BP})) for Case 1.} \label{Case1}
	\centering
	\begin{tabular}{|c|c|c|c|}
		\hline
		\textbf{Method}                             & $10 \ \text{log}_{10}(\rho(\mb x))$        & Run time (sec)& Iterations\\ \hline
        Unconstrained                               & $19.93$                                    & -                   & -       \\ \hline
		WBFIT\cite{he2011wideband}                  & $31.93$                                    & $0.37$              & $135$     \\ \hline
		SDR\cite{luo2010semidefinite,cui2014mimo}   & $25.50$                                     & $1107$              & $31$      \\ \hline
		IA-CPC \cite{cui2017quadratic}              & $28.30$                                    & $14.39$             & $172$     \\ \hline
		ADMM\cite{Liang16}                          & $24.93$                                    & $19.20$             & $200$     \\ \hline
		PDR                                         & $\mb{22.80}$                               & $\mb{9.27}$         & $121$     \\ \hline
	\end{tabular}
\end{table}
In Table \ref{Case1}, the values of the deviation from the desired beampattern $\rho(\mb x)$ (where $\rho (\mb x)$ is the cost function in Eq (\ref{Eq:BP})) are reported. Table \ref{Case1} confirms that PDR provides substantial gains, about 2.13 dB, 2.7 dB, and 5.5 dB over ADMM, SDR, and IA-CPC, respectively.  

In Fig. \ref{Fig:BPcase1} a 2D visualization of the designed beampattern is shown for each of the competing methods. Clearly, PDR achieves a beampattern that is closest to the unconstrained case, which naturally serves as a bound on the performance.


	\textbf{Case 2:} The desired beampattern is given by 
	\begin{equation}\label{Scenario2}
	d(\theta,f)=
	\begin{cases}
	0 & \theta=[10^{\circ},80^{\circ}], \  -\frac{B}{2}+f_c\leq f \leq f_c\\
	0 & \theta=[95^{\circ},145^{\circ}], \  f_c\leq f \leq \frac{B}{2}+f_c\\
	1 & \text{Otherwise}.
	\end{cases}
	\end{equation}

\begin{table}[H]
	\caption{The deviation from the desired beampattern (the cost function in Eq (\ref{Eq:BP})) for Case 2.} \label{Case2}
	\centering
	\begin{tabular}{|c|c|c|c|}
		\hline
		\textbf{Method}                             & $10 \ \text{log}_{10}(\rho(\mb x))$        & Run time (sec) & Iterations\\ \hline
		Unconstrained                               & $19.52$                                    & -                    & -         \\ \hline
		WBFIT\cite{he2011wideband}                  & $33.11$                                    & $0.39$               & $199$     \\ \hline
		SDR\cite{luo2010semidefinite,cui2014mimo}   & $28.41$                                    & $1190$               & $33$      \\ \hline
		IA-CPC \cite{cui2017quadratic}              & $30.86$                                    & $14.47$              & $180$     \\ \hline
		ADMM\cite{Liang16}                          & $28.91$                                    & $20.36$              & $208$     \\ \hline
		PDR                                         & $\mb{26.69}$                               & $\mb{7.38}$          & $115$     \\ \hline
	\end{tabular}
\end{table}

\begin{figure*}
	\centering
	\subfloat[Unconstrainted]{\includegraphics[width=0.49\linewidth]{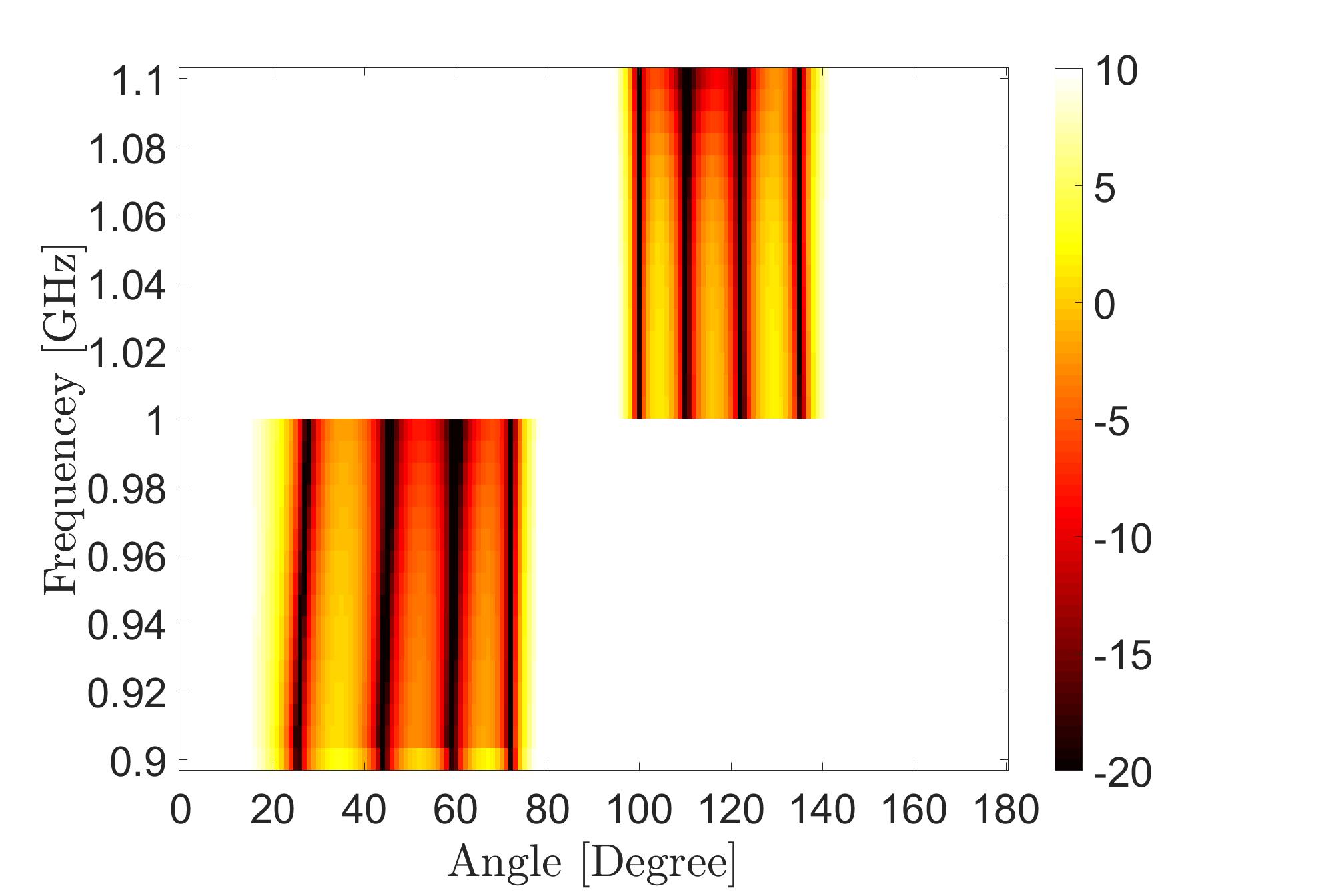}\label{desired2}}\quad
	\subfloat[WBFIT\cite{he2011wideband}]{\includegraphics[width=0.49\linewidth]{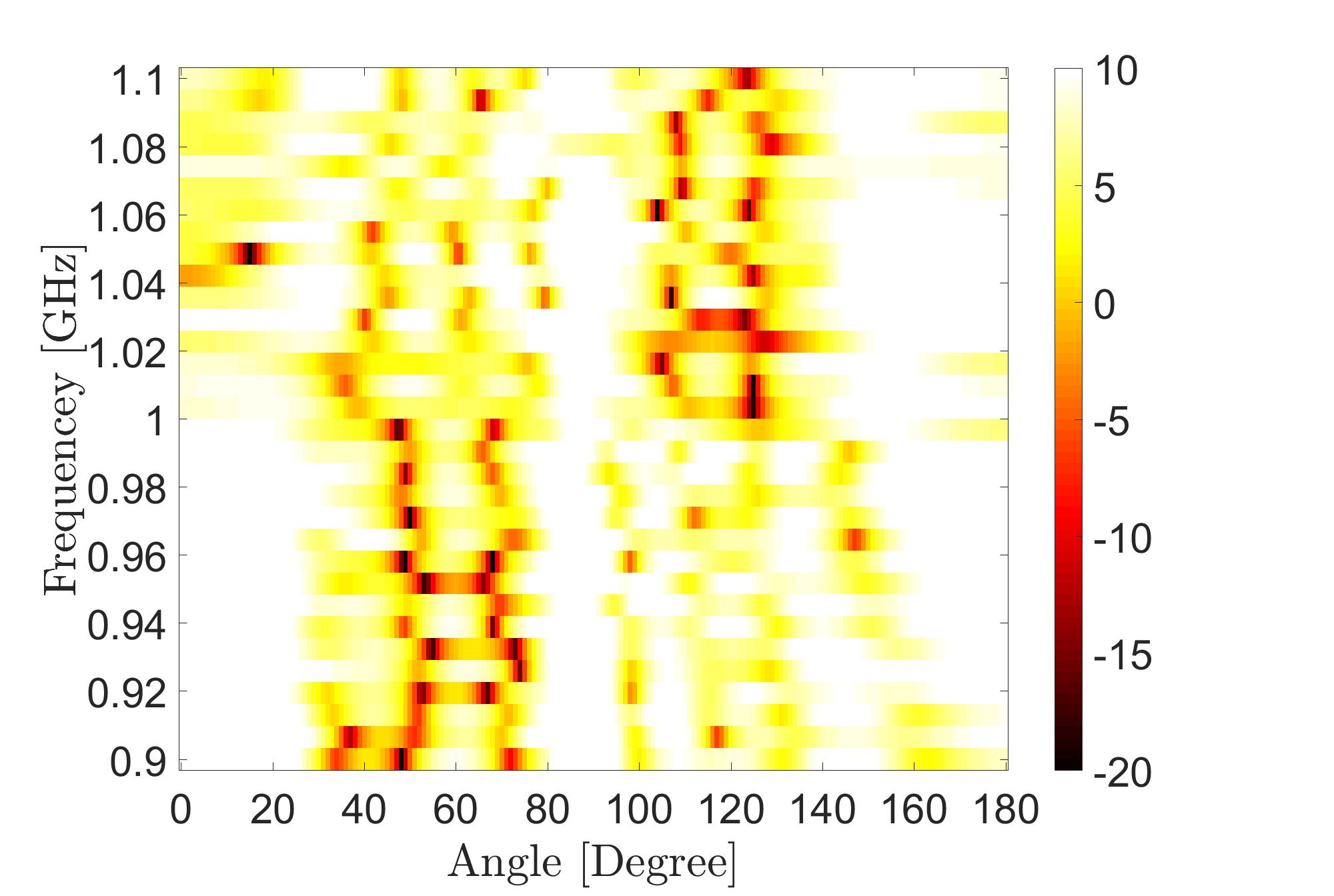}}\\
	\subfloat[SDR\cite{luo2010semidefinite,cui2014mimo}]{\includegraphics[width=0.49\linewidth]{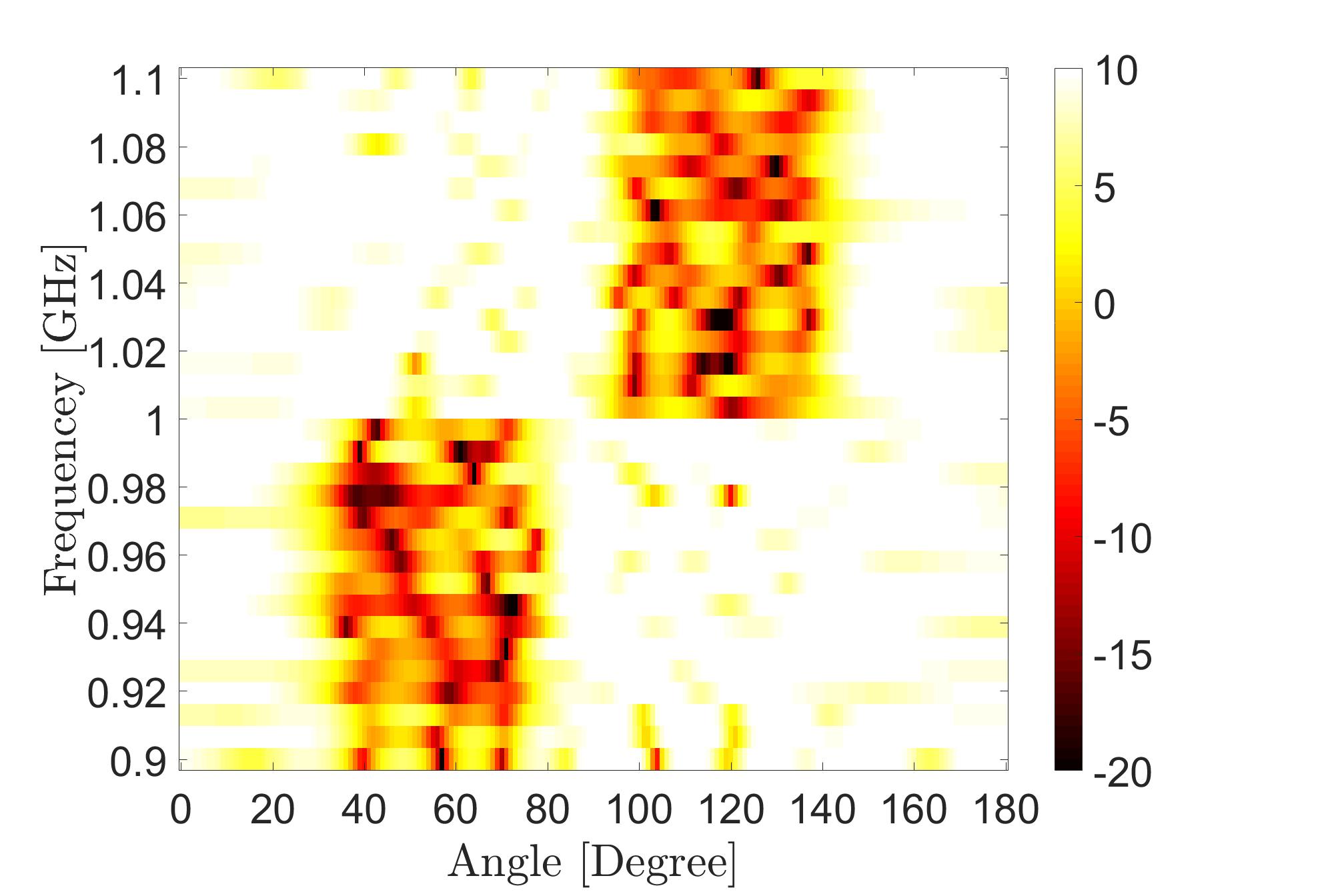}}\quad
	\subfloat[IA-CPC \cite{cui2017quadratic}]{\includegraphics[width=0.49\linewidth]{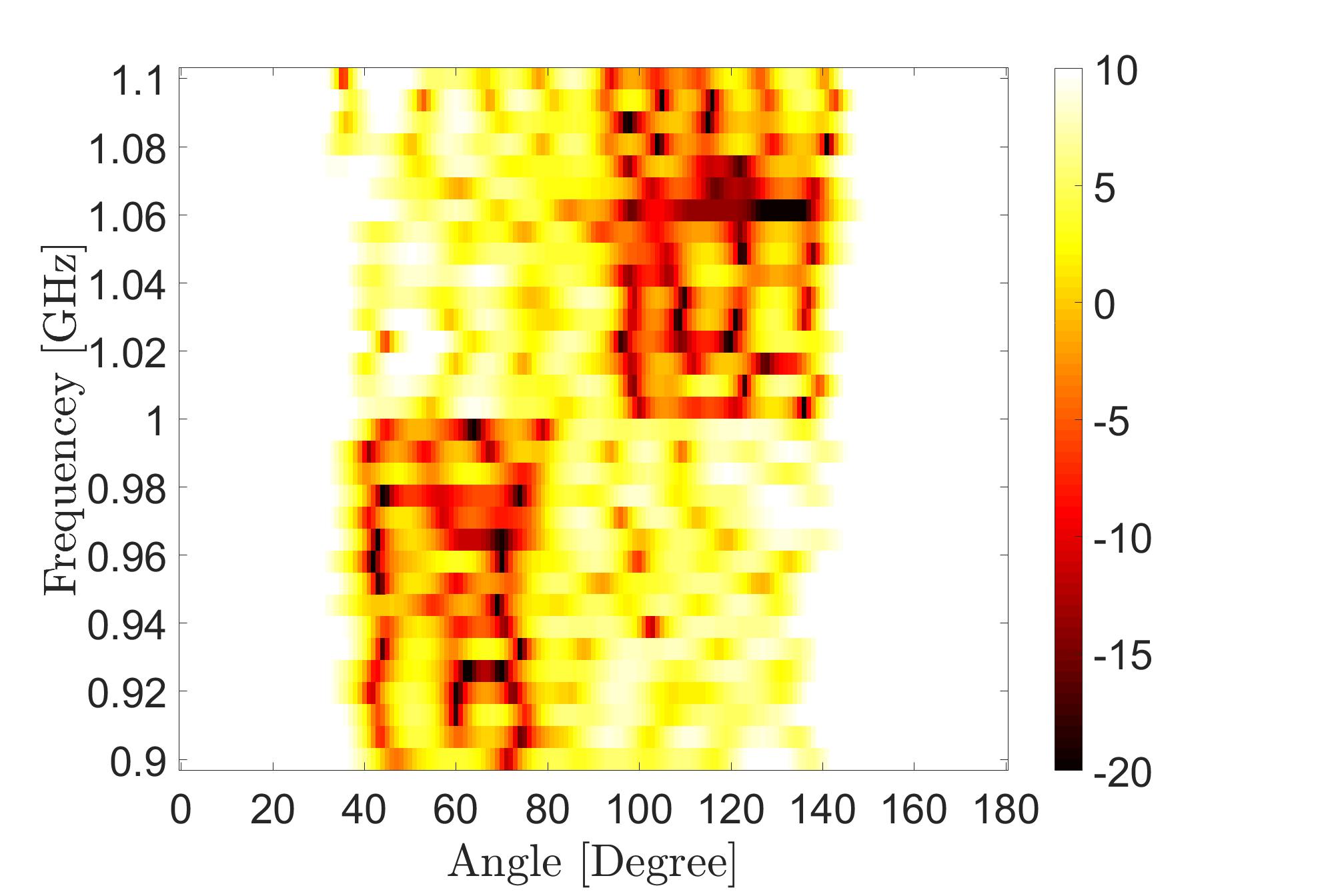}}\\
	\subfloat[{ADMM}\cite{Liang16}]{\includegraphics[width=0.49\linewidth]{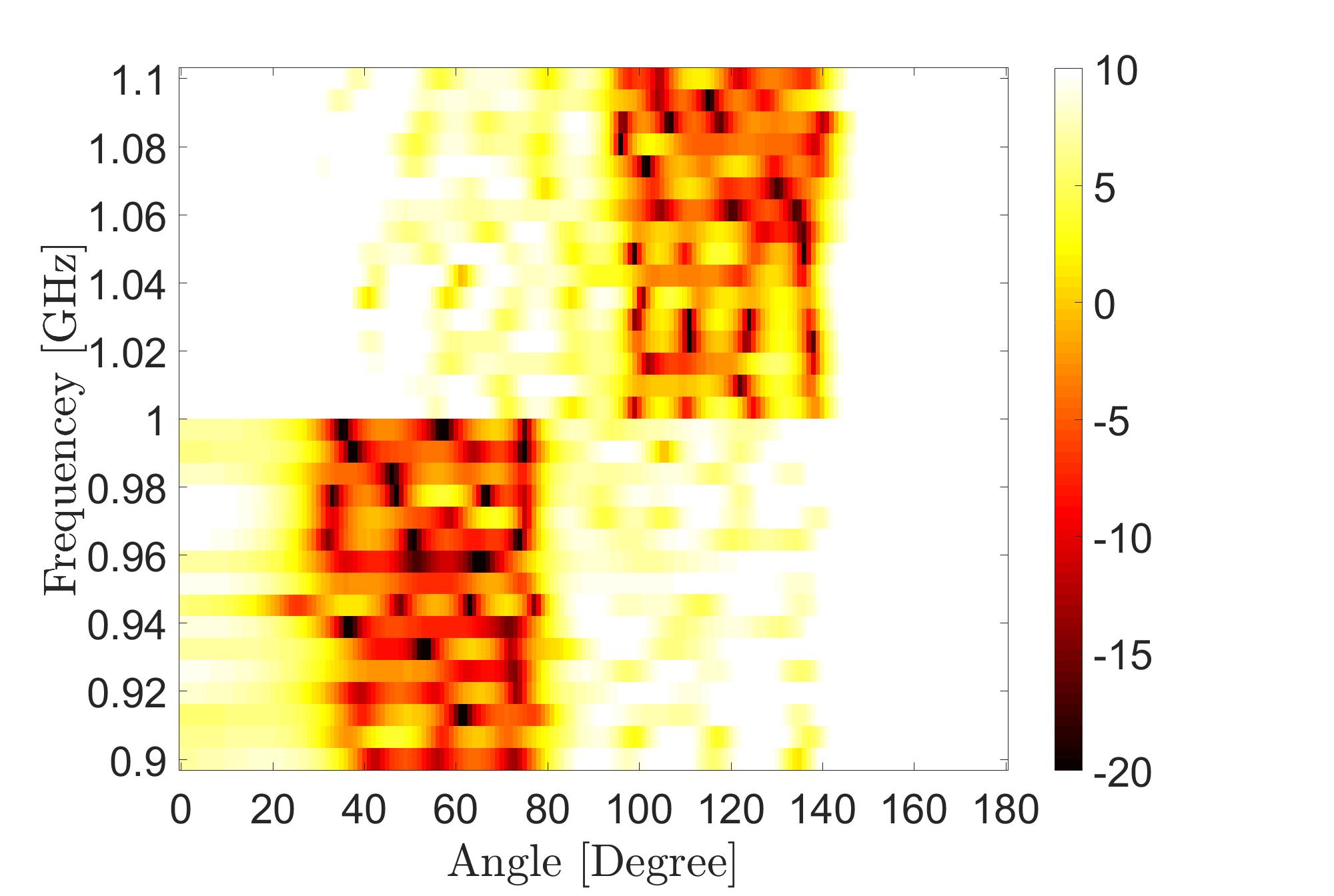}}\quad
	\subfloat[PDR]{\includegraphics[width=0.49\linewidth]{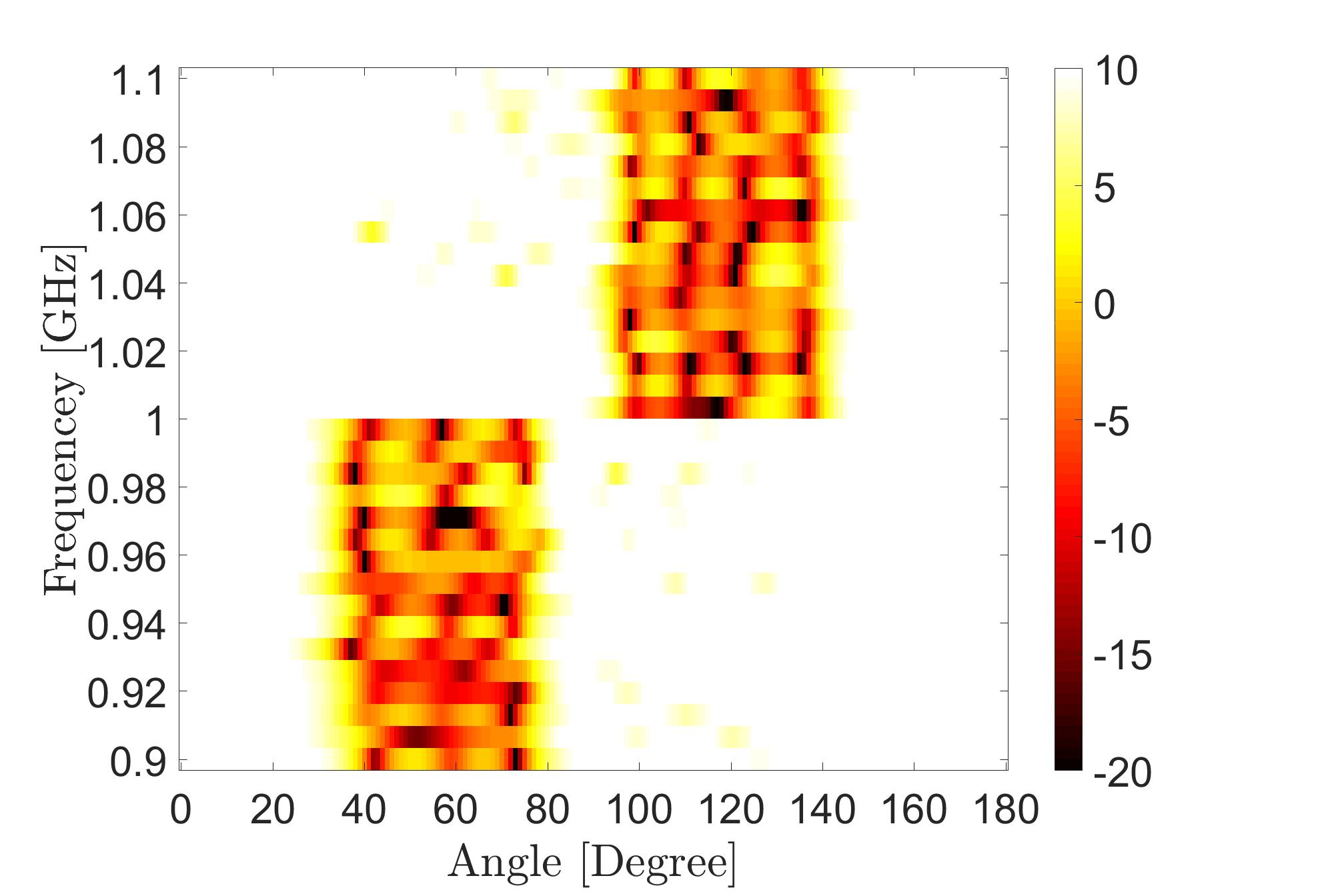}}
	\caption{{The beampattern for Case 2 obtained by (a) the unconstrained design, (b) WBFIT, (c) SDR, (d) IA-CPC, (e) ADMM, and (f) PDR.}} \label{Fig:BPcase2}
\end{figure*}
Similar to Case 1, the values of the deviation from the desired beampattern are listed in Table \ref{Case2}. Clearly, the PDR algorithm gives the closest value to the unconstrained case with a gap of 2 dB over the second best method. In Fig. \ref{Fig:BPcase2}, the designed beampattern is visualized for all the competing methods. Clearly, the beampattern that results from PDR is closer to the desired one than those resulting form competing methods. 

	\textbf{Case 3:} For this scenario, the beampattern will be suppressed in two angular-frequency regions as follows
		\begin{equation} \label{ScenarioCase3}
		d(\theta,f)=
		\begin{cases}
		0 & \theta=[40^{\circ},80^{\circ}], \  f=[943.75,981.25 ]\\
		0 & \theta=[120^{\circ},160^{\circ}], \  f=[962.5,1000]\\
		1 & \text{Otherwise}.
		\end{cases}
		\end{equation}
The transmission is restricted in certain practical frequency bands in accordance with \cite{rowe2014spectrally,kang2018spatio}. Precisely, these frequencies are: $\  f=[1.025\ \text{GHz},1.0625\ \text{GHz}]$ (see Fig. \ref{Fig:desiredCase3}).
\begin{figure}
	\centering
	\includegraphics[width=.98\linewidth]{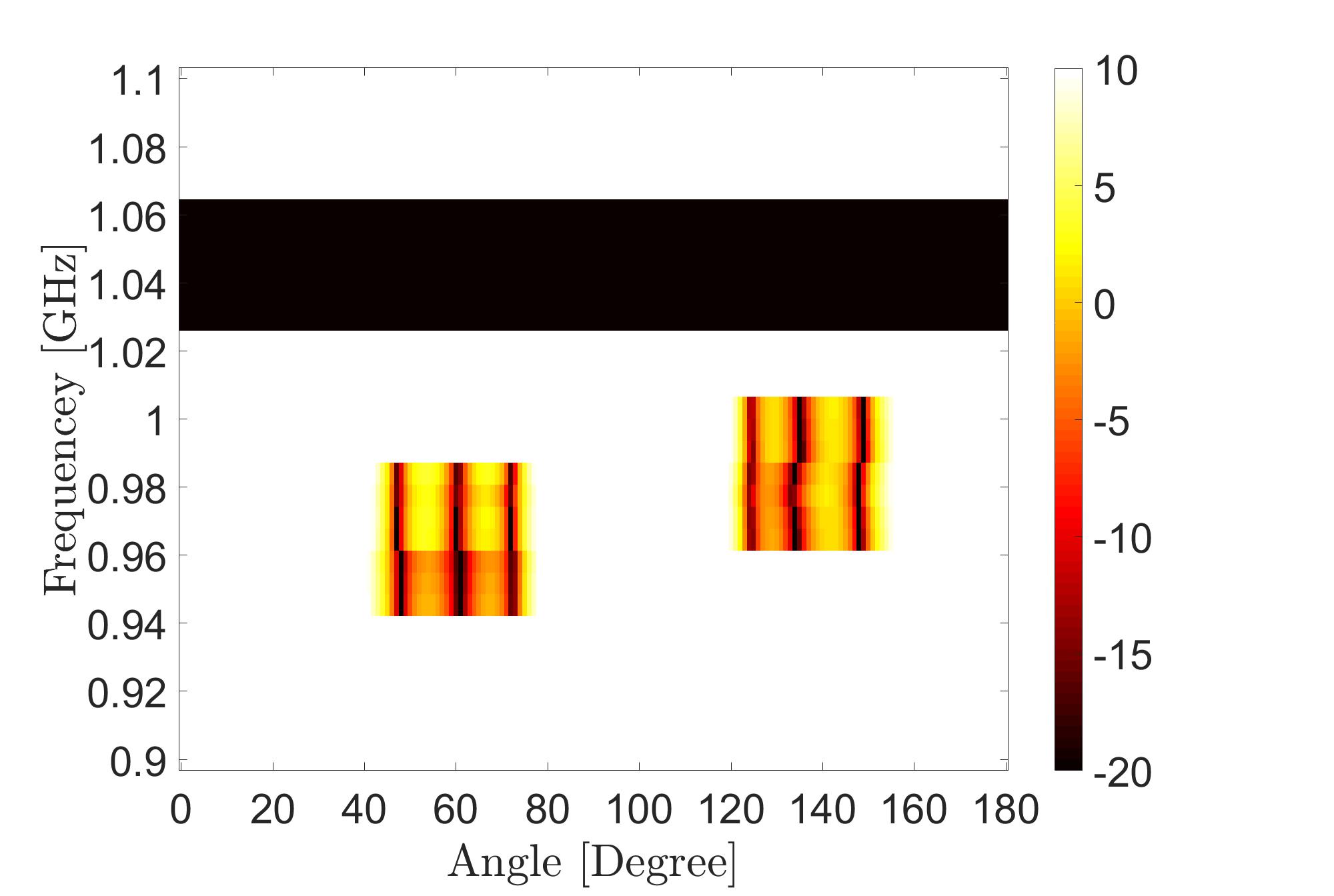}
	\caption{The desired beampattern (Case 3).}
	\label{Fig:desiredCase3}
\end{figure}	
\begin{table}[H]
	\caption{The deviation from the desired beampattern (the cost function in Eq (\ref{Eq:BP})) for Case 3.} \label{Case3}
	\centering
	\begin{tabular}{|c|c|c|c|}
		\hline
		\textbf{Method}                            & $10 \ \text{log}_{10}(\rho(\mb x))$        & Run time (sec) & Iterations\\ \hline
		Unconstrained                              & $18.18$                                    & -                    & -         \\ \hline
		WBFIT\cite{he2011wideband}                 & $34.29$                                    & $1.55$               & $102$     \\ \hline
		SDR\cite{luo2010semidefinite,cui2014mimo}  & $28.21$                                    & $1262$               & $37$      \\ \hline
		IA-CPC \cite{cui2017quadratic}             & $31.78$                                    & $18.58$              & $192$     \\ \hline
		ADMM\cite{Liang16}                         & $28.26$                                    & $24.10$              & $231$     \\ \hline
		PDR                                        & $\mb{27.86}$                               & $\mb{12.14}$         & $142$     \\ \hline
	\end{tabular}
\end{table}
As Table \ref{Case3} reveals, also in this case, PDR outperforms all competing methods in terms of deviation from the desired beampattern. 

\noindent \textbf{Remark:} Note that Tables \ref{Case1}, \ref{Case2}, and \ref{Case3} also report run time (in seconds) as the time taken to optimize the waveform code $\mathbf x$. For fairness of comparison, we used the same platform for all implementations: MATLAB R16, CPU Core i5, 3.1 GHz and 8 GB of RAM. Overall, Tables \ref{Case1}, \ref{Case2}, and \ref{Case3} show that the different methods exhibit complementary merits. WBFIT, one of the earliest methods proposed to solve this problem is the fastest, but its deviation from the idealized beampattern is highest. With more sophisticated optimization techniques, IA-CPC and SDR offer performance gains but also increase complexity.
It is readily apparent from these tables that PDR is highly efficient computationally bettered only by WBFIT. And PDR offers nearly 7 dB of gain in performance over WBFIT. Hence the results corroborate our assertion that PDR provides the most favorable complexity-performance trade-off. 

\subsection{Joint CMC and orthogonality constraints}
\label{Subsec:CMC_orthogonality}
In this numerical simulation, the beampattern design under the CMC and orthogonality constraints using PDR (run with a step size $\beta  = 0.00003$) will be examined. The level of orthogonality will be measured by using the following quantity:
\be \label{mesure}
\text{ISL}_0=20\ \text{log}_{10}\frac{\|\mathbf{X}^H\mathbf{X}-N\mathbf{I}_M\|_F}{\sqrt{MN^2}}
\ee
where ISL is the Integrated Sidelobe Level between the transmitted signal defined in \cite{he2009designing}. Orthogonality across antennas is equivalent to the auto correlation case (ISL at time-lag $0$), i.e., $\text{ISL}_0$.
In terms of the desired beampattern, we consider the same scenario as Case 1 in the previous subsection. The values of the deviation from the desired beampattern are reported in Table \ref{Table:CMC_orthogonality}. PDR is now compared against approaches that also directly or approximately capture both CMC and orthogonality; this includes: 1) the well-known and widely used linear frequency modulated (LFM ) waveform code  \cite{levanon2004radar,richards2010principles}, 2) Weighted-cyclic algorithm-new (WeCAN) \cite{he2009designing}, and 3) the recent simulated annealing based approach (SimulAnneal) \cite{deng2016mimo}, which is one of the few known techniques that performs explicit beampattern design under both CMC and orthogonality constraints.

First, compared to Table \ref{Case1}, the deviation as reported in Table \ref{Table:CMC_orthogonality} is higher. This is of course to be expected because we are not just enforcing CMC, but also orthogonality which means a smaller feasible set of waveform codes to optimize over. The results in Table \ref{Table:CMC_orthogonality} reveal that LFM and WeCAN lead to somewhat high deviation - this is unsurprising since neither approach explicitly designs the beampattern. When $\alpha = 100$, PDR leads to waveform codes that are orthogonal for all practical purposes as evidenced by the ISL measure. As  Table \ref{Table:CMC_orthogonality} reveals, PDR achieves the closest beampattern to the desired one (lowest deviation in dB) with SimulAnneal as the second best. Remarkably, even with $\alpha = 200$, i.e. when the emphasis on orthogonality is particularly strong (ISL of $-12.12$ dB for PDR vs. $-3.67$ for SimulAnneal), PDR provides  a gain of 1 dB. The gain of PDR over SimulAnneal  is about 3 dB for comparable ISL values.
\begin{table}[!ht]
	\centering
	\caption{Cost function and Auto-correlation level for PDR vs WeCAN and SA-Method}
	\label{Table:CMC_orthogonality}
	\begin{tabular}{|c|c|c|c|c|}
		\hline
		Method                             & $10 \ \text{log}_{10}(\rho(\mb x))$ & $\alpha$ & $\text{ISL}_0$ (dB) & Time (sec) \\ \hline
		LFM                                & $32.93$                             & -        & $-302.53$           & -                \\ \hline
		WeCAN \cite{he2009designing}       & $32.71$                             & -        & $-102.92$           & $11.52$          \\ \hline
		SimulAnneal \cite{deng2016mimo}    & $30.72$                             & -        & $-3.67$             & $9.29$           \\ \hline
		\multirow{3}{*}{\textbf{PDR}}
		                                   & $27.77$                             & $80$     & $-5.27$             & $10.58$          \\ \cline{2-5} 
		                                   & $28.64$                             & $100$    & $-6.92$             & $12.67$          \\ \cline{2-5} 
		                                   & $29.66$                             & $200$    & $-12.12$            & $15.05$          \\ \hline
	\end{tabular}
\end{table}
\subsection{Benefits of orthogonality: robustness to direction mismatch}
The transmit-receive pattern $G_{TR}(\theta,\theta_0)$  measures the beamformer response when the beam is digitally steered in the direction $\theta$ and when the true location of the target
is at angle $\theta_0$. It has been shown in \cite{bekkerman2006target} and \cite{li2009mimo} (as an advantage of orthogonal over coherent signals) that for orthogonal signalling, the effect of the direction mismatch (when the target is not located in the center of
the transmit beam) is minimal. 

The pattern $G_{TR}(\theta,\theta_0)$ can be expressed as  
\be \label{bp1}
\begin{aligned}
	G_{TR}(\theta,\theta_0)=N\frac{|\mathbf{a}_R^H(\theta)\mathbf{a}_R(\theta_0)|^2}{M_R}.\frac{|\mathbf{a}_T^H(\theta)\mathbf{R}_s^T\mathbf{a}_T(\theta_0)|^2}{\mathbf{a}_T^H(\theta)\mathbf{R}_s^T\mathbf{a}_T(\theta)}
\end{aligned}
\ee 
where $M_R$ is the number of receiver antennas, $\mathbf{a}_T(\theta)$ and $\mathbf{a}_R(\theta)$ are the steering vectors on the transmitter and receiver sides, respectively. These steering vectors are defined in Equations (4.6) and (4.7) in \cite{li2009mimo}. 
$\mathbf{R}_s$ is the transmit signal correlation matrix defined as 
\be \label{coherencematrix}
\mathbf{R}_s = \big(\mathbf{X}^T \odot \mathbf{a}_T(\theta_0)\big)\big(\mathbf{X}^T \odot \mathbf{a}_T(\theta_0)\big)^H
\ee
where $\mb X \in \mathbb{C}^{N\times M}$ is the transmit waveform matrix defined in  Section \ref{Sec:PDR_Orth}. For the two extreme cases 1) coherent transmission\footnote{In coherent transmission, the transmit signals from all antennas are phase-shifted versions from one reference signal \cite{li2009mimo}.} ($\mathbf{R}_s ={\mathbbm{1}_M}$) and  2) orthogonal transmission ($\mathbf{R}_s =\mb I_M$ )  \cite{bekkerman2006target}, the transmit-receive patterns for these cases will be
\ben \label{bp2}
\begin{aligned}
	G_{TR(\text{coherent})}(\theta,\theta_0)=N\frac{|\mathbf{a}_R^H(\theta)\mathbf{a}_R(\theta_0)|^2|\mathbf{a}_T^H(\theta_0) \mb 1|^2}{M_R}
\end{aligned}
\een 
\ben \label{bp3}
\begin{aligned}
	G_{TR(\text{orthogonal})}(\theta,\theta_0)=N\frac{|\mathbf{a}_R^H(\theta)\mathbf{a}_R(\theta_0)|^2|\mathbf{a}_T^H(\theta)\mathbf{a}_T(\theta_0)|^2}{M_RM}
\end{aligned}
\een 
Fig. \ref{Fig:mismatch} shows the pattern $G_{TR}(\theta,\theta_0)$ for (a) LFM ($\mathbf{R}_s=N\mathbf{I}_M$), (b) Coherent Transmission, (c) WBFIT, (d) IA-CPC, and (e) PDR ($\alpha = 200$) signals where the beam in the transmit mode is directed to $\theta=125^{\circ}$ and the true location of the target is such that $\Delta \theta = (\theta - \theta_0)=0^{\circ},\ 10^{\circ},\ 20^{\circ}$. The desired beampattern for this {simulation} is the same as Case 1 in Section IV-A, except that PDR was now optimized to generate orthogonal waveforms.

The advantage of using orthogonal signals (LFM) over a coherent scheme is very noticeable: the gain loss is nonexistent for LFM when the target deviates from the transmission direction (i.e., target is not located in the center of the transmit beam).  Remarkably, PDR signals with $\alpha=200$ achieve results comparable to the LFM case, whereas in the absence of orthogonal processing, WBFIT in Fig. \ref{Fig:mismatch} (c) and IA-CPC in Fig. \ref{Fig:mismatch} (d) like the coherent case in Fig. \ref{Fig:mismatch} (b) suffer significant loss in mainlobe strength under target direction mismatch. 

{The focus of our work is on transmit waveform design but in future investigations, receive processing may also be optimized to obtain the most desirable transmit-receive beampattern.}
\begin{figure}[!ht]
	\centering
	\subfloat[LFM]{\includegraphics[width=0.9\linewidth]{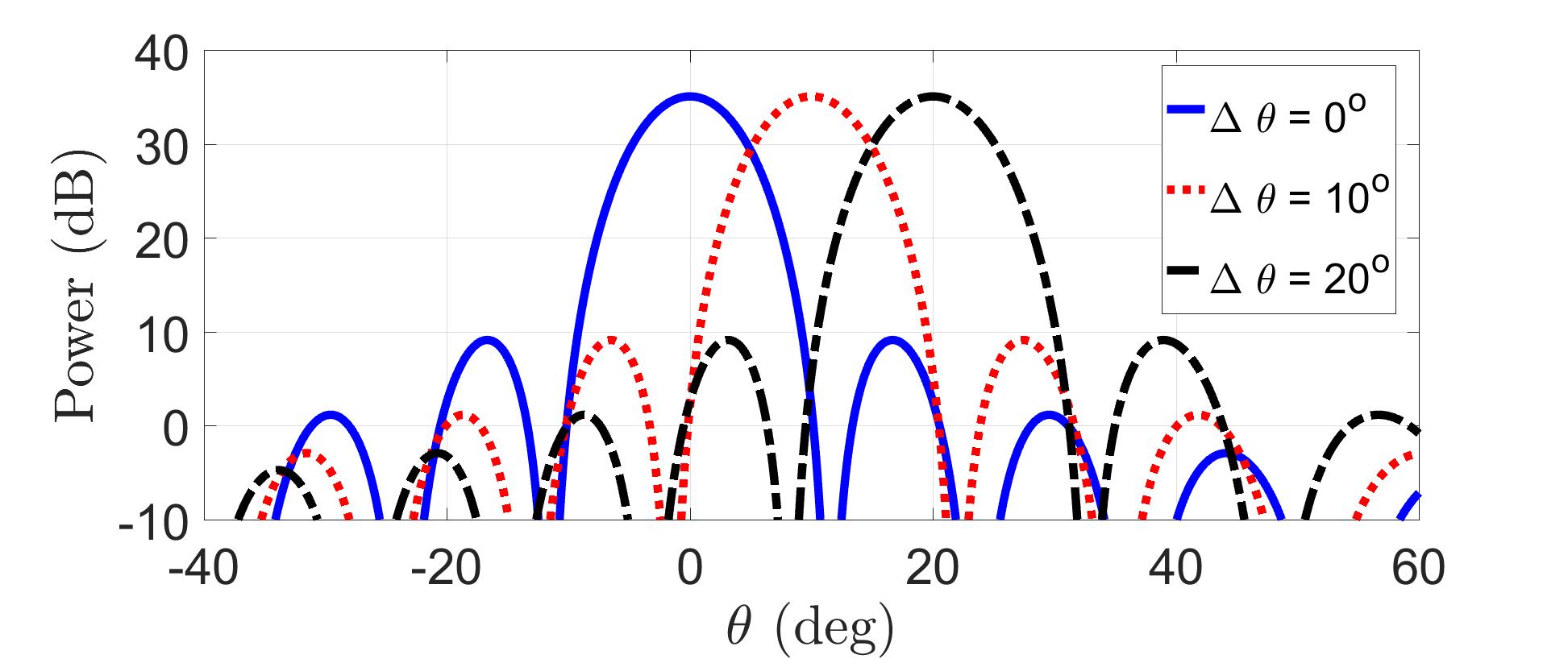}}\\
	\subfloat[Coherent Transmission]{\includegraphics[width=0.9\linewidth]{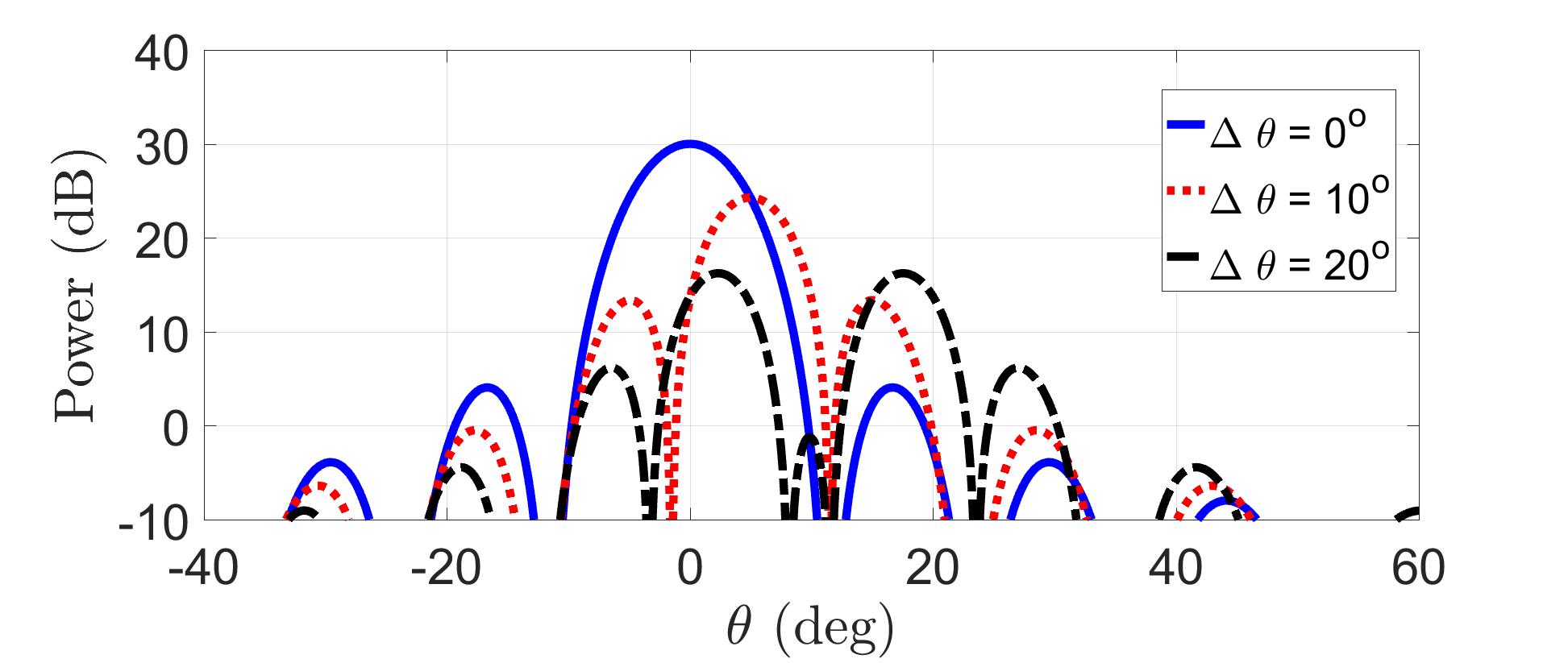}}\\
	\subfloat[WBFIT \cite{he2011wideband}]{\includegraphics[width=0.9\linewidth]{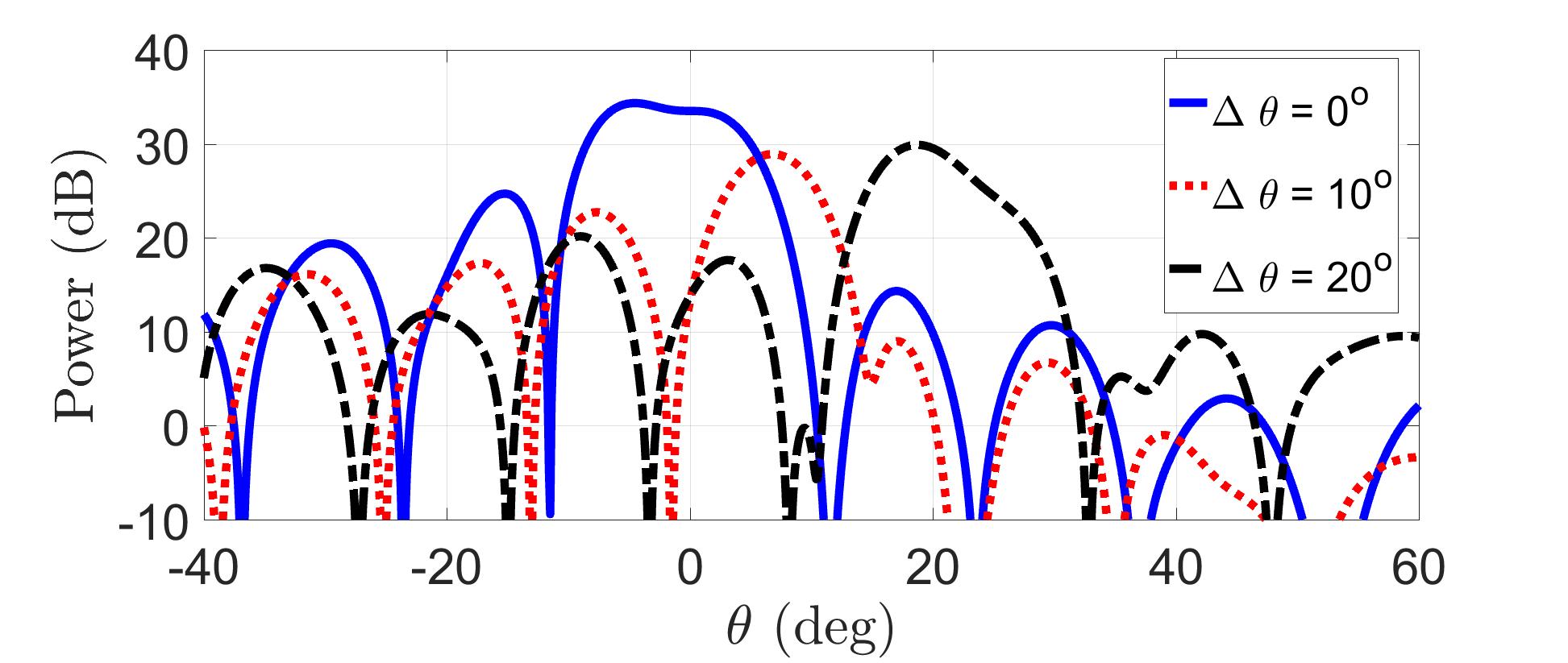}}\\
	\subfloat[IA-CPC \cite{cui2017quadratic}]{\includegraphics[width=0.9\linewidth]{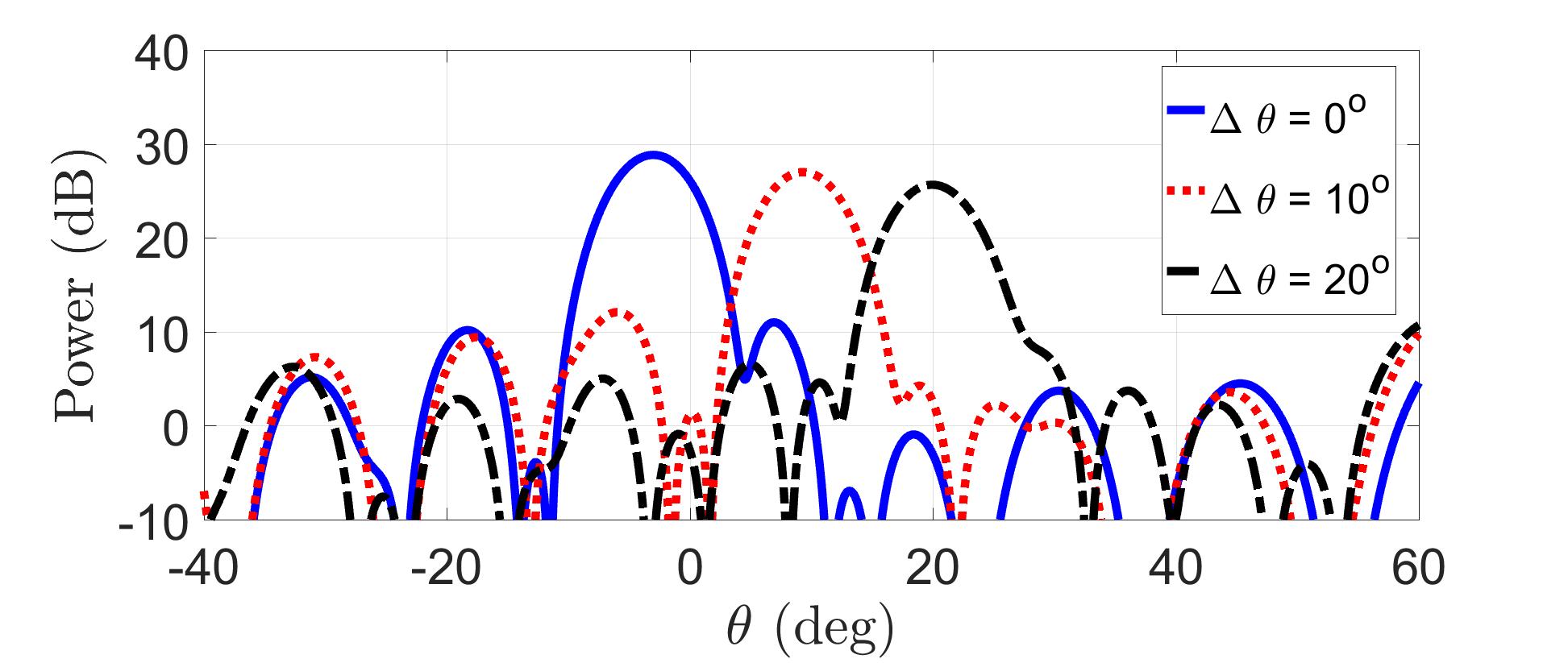}}\\
	\subfloat[PDR $\alpha = 200$]{\includegraphics[width=0.9\linewidth]{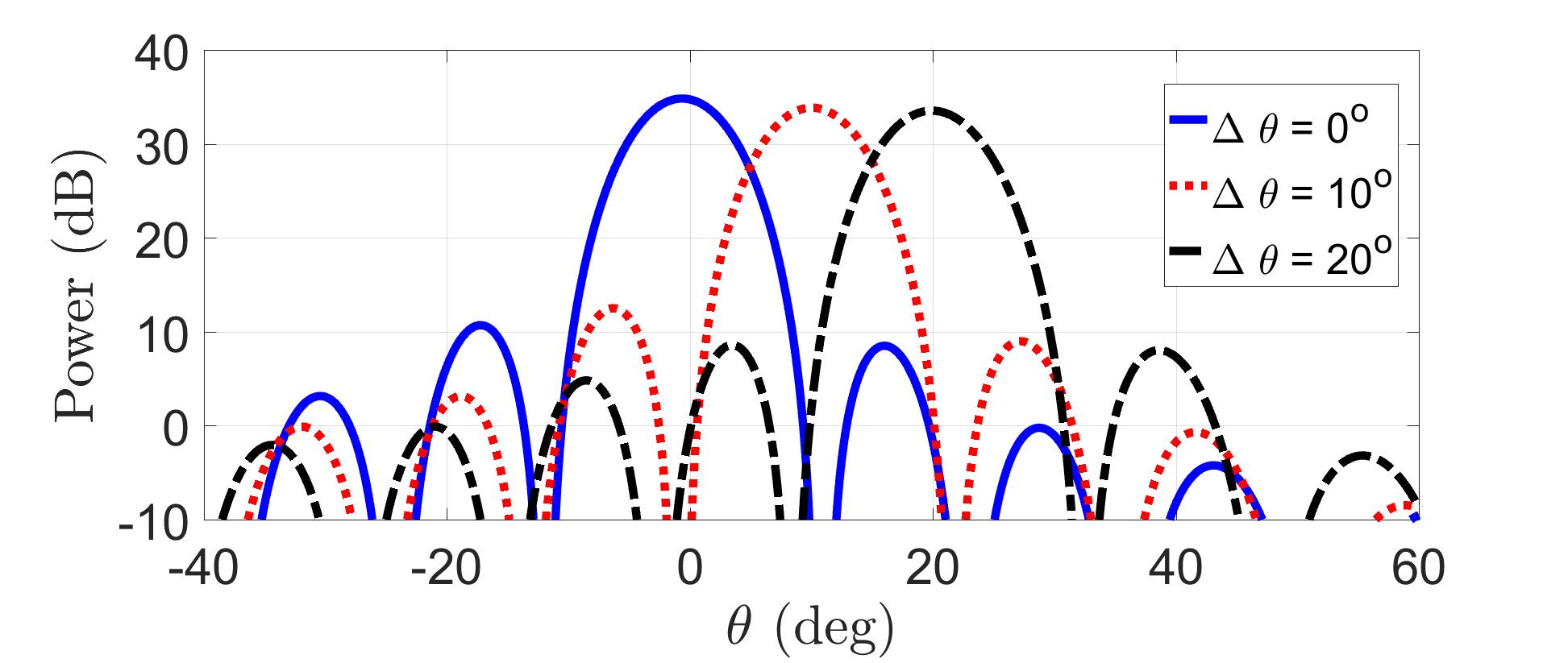}}\label{Fig:mismatch_d}
	\caption{{The pattern $G_{TR}(\theta,\theta_d)$ for (a) LFM, (b) Coherent, (c) WBFIT, and (d) PDR ($\alpha = 200$) signals. The beam in the transmit mode is directed to $\theta=0^{\circ}$ and the target is located at $\theta_0=0^{\circ},\ 10^{\circ},\ 20^{\circ}$.}} \label{Fig:mismatch}
\end{figure}
\section{Discussions and Conclusion}
\label{Sec:Conclusion}

We consider the problem of transmit beampattern design for MIMO radar under compelling practical constraints. The non-convex constant modulus constraint (CMC) is our main focus whose presence is known to yield a hard optimization problem. For tractability, CMC is addressed in the literature often by relaxations and approximations or by approaches that are  computationally burdensome. Our proposed PDR algorithm invokes the principles of optimization over manifolds to address the non-convex CMC and we demonstrate via simulations that the said PDR offers a favorable performance-complexity trade-off. Analytical guarantees of monotonic cost function decrease and convergence are provided for quadratic cost functions that arise in beampattern design. Finally, a tractable extension was developed to incorporate orthogonality of waveforms across antennas. Experimentally, the benefit of orthogonality combined with CMC is the synthesis of practical, real-world waveforms that demonstrate robustness to target direction mismatch.

A viable future work direction is to exploit for CMC constrained beampattern design new algorithms that use the framework of the sequential approximation such as \cite{razaviyayn2013unified} and recent advances in \cite{aubry2018new}, \cite{song2015optimization}. Of particular interest is investigating KKT optimality of the resulting solution.
 
\section*{Acknowledgement}
The authors would like to thank the authors of the work \cite{cui2017quadratic} for providing us with the MATLAB code for the  Iterative Algorithm for Continuous Phase Case (IA-CPC) algorithm.  
\appendix
\section{Appendix A}
\label{Sec:Appendix}
In this section we provide the proof of the three lemmas that presented in this work and the equivalence between Eq (\ref{diffPsi}) and the retraction step.
\subsection{Proof of Lemma \ref{Lemma:L1}}
\label{Subsection:AppA}
Before we establish the proof, the projection operator in Eq (\ref{compcir2}) will be reformulated as follows. The projection of a vector $\mb w \in \mathbb{C}^{L}$  onto the tangent space $\mathcal{T}_{\mb z}\mathcal{S}^L$ at a point $\mb z \in \mathcal{S}^L$ can be rewritten using the elementary properties the Hadamard product and the fact that $\text{Re}\{\mb w^{*}\odot\mb z\}=\frac{1}{2}\big[\mb w^{*}\odot\mb z+\mb w\odot\mb z^{*}\big]$ as  

\begin{IEEEeqnarray}{rCl}
	\mathbf{P}_{\mathcal{T}_{\mb z}\mathcal{S}^L}(\mb w)&=&\mb w - \frac{1}{2}\big[\mb w^{*}\odot\mb z+\mb w\odot\mb z^{*}\big]\odot \mb z\nonumber\\
&=&\mb w - \frac{1}{2}\big[\mb w^{*}\odot\mb z\odot \mb z+\mb w\odot\mb z^{*}\odot \mb z\big]\nonumber\\
&=&\mb w - \frac{1}{2}\big[\text{ddiag}(\mb z \mb z^T)\mb w^{*}+\mb w\big]\nonumber\\
&=&\frac{1}{2}\big(\mb w - \mb D_{\mb z}\mb w^{*}\big)=\frac{1}{2}\big(\mb w - \tilde{\mb w}\big)\label{proj1}
\end{IEEEeqnarray}
where $\mb D_{\mb z}=\text{ddiag}(\mb z \mb z^T)$,  and $\tilde{\mb w}=\mb D_{\mb z} \mb w^{*}$. Note that since $\mb z \in \mathcal{S}^L$, then $\mb D_{\mb z}^H\mb D_{\mb z}=\mb D_{\mb z}\mb D_{\mb z}^H= \mb I$, and hence \be \label{observation1}\tilde{\mb w}^H\tilde{\mb w}=\mb w^H\mb w\ee
For notation simplicity, will use $\mb w$ instead of $\nabla_{\mb x} \bar{f}(\mb x_{(k)})$ to represent the gradient of $\bar{f}(\mb x)$ defined in Eq (\ref{Eq:P2}) at iteration $k$, i.e., $\mb w= \nabla_{\mb x} \bar{f}(\mb x_{(k)})=2\mb R \mb x_{(k)} - 2 {\mb q}$, and hence the search direction $\boldsymbol{\eta}_{(k)}$ will be 
\ben
\boldsymbol{\eta}_{(k)}= -\mb w
\een
The projection of the search direction $\boldsymbol{\eta}_{(k)}$ onto the tangent space $\mathcal{T}_{\mb x_{(k)}}\mathcal{S}^L$ at a point $\mb x_{(k)}$ will be computed using the new projection formula Eq (\ref{proj1}), and with a step $\beta$ along this direction  starting from $\mb x_{(k)}$, the update on the tangent space will be
\begin{IEEEeqnarray}{rCl}
		\bar{\mathbf{x}}_{(k)}&=&\mathbf{x}_{(k)}+\beta \mathbf{P}_{\mathcal{T}_{\mb x_{(k)}}\mathcal{S}^L}(\boldsymbol{\eta}_{(k)})\nonumber\\
	&=&\mathbf{x}_{(k)}+\beta \mathbf{P}_{\mathcal{T}_{\mb x_{(k)}}\mathcal{S}^L}(-\mb w)\nonumber\\
	&=&\mathbf{x}_{(k)} -\frac{\beta}{2}\big(\mb w - \tilde{\mb w}\big)\label{alg22}
\end{IEEEeqnarray}
\begin{proof}
	Let $\mb R = \mb P +\gamma \mb I$, the value of the cost function $\bar{f}(\mb x)=\mb x^H \mb R\mb x - \mb q^H \mb x - \mb x^H \mb q$ on the tangent space at the point $\bar{\mathbf{x}}_{(k)}$ \big(using the value of $\bar{\mathbf{x}}_{(k)}$ in Eq (\ref{alg22})\big) will be 
		\begin{IEEEeqnarray}{rCl}
			\bar{f}(\bar{\mathbf{x}}_{(k)}) &=& \mb x_{(k)}^H\mb R \mb x_{(k)}\nonumber\\* \quad
		&&-\frac{\beta}{2}\mb x_{(k)}^H \mb R (\mb w - \tilde{\mb w})-\frac{\beta}{2} (\mb w - \tilde{\mb w})^H \mb R \mb x_{(k)} \nonumber\\ 
		&&+\frac{\beta^2}{4} (\mb w - \tilde{\mb w})^H  \mb R (\mb w - \tilde{\mb w})-\mb q^H \mb x_{(k)}\nonumber \\ 
		&& + \frac{\beta}{2} \mb q^H(\mb w - \tilde{\mb w})-\mb x_{(k)}^H \mb q+\frac{\beta}{2} (\mb w - \tilde{\mb w})^H \mb q\nonumber \\
		&=& \mb x_{(k)}^H  \mb R \mb x_{(k)}-\frac{\beta}{4} \mb w^H(\mb w - \tilde{\mb w})-\frac{\beta}{4} (\mb w - \tilde{\mb w})^H \mb w \nonumber\\ 
		&&+\frac{\beta^2}{4} (\mb w - \tilde{\mb w})^H  \mb R (\mb w - \tilde{\mb w})-\mb q^H \mb x_{(k)}-\mb x_{(k)}^H \mb q \nonumber
		\end{IEEEeqnarray}
	Now, the difference  $\bar{f}(\mb x_{(k)})-\bar{f}(\bar{\mb x}_{(k)})$ will be
	\begin{IEEEeqnarray}{rCl}
		\bar{f}(\mb x_{(k)})-\bar{f}(\bar{\mb x}_{(k)})
		&=& \frac{\beta}{4} \mb w^H(\mb w - \tilde{\mb w})+\frac{\beta}{4} (\mb w - \tilde{\mb w})^H \mb w \nonumber\\
		 &&-\frac{\beta^2}{4} (\mb w - \tilde{\mb w})^H \mb R (\mb w - \tilde{\mb w})\nonumber\\
		&=&\frac{\beta}{4} \big( \mb w^H\mb w - \mb w^H\tilde{\mb w} - \tilde{\mb w}^H \mb w+ \tilde{\mb w}^H\tilde{\mb w}\big)\nonumber\\
		 &&-\frac{\beta^2}{4} (\mb w - \tilde{\mb w})^H \mb R (\mb w - \tilde{\mb w})\label{Eq:diff1_1}\\
		&=& \frac{\beta}{4}(\mb w - \tilde{\mb w})^H (\mb I-\beta \mb R)(\mb w - \tilde{\mb w})\label{Eq:diff1}
\end{IEEEeqnarray}
	where in Eq (\ref{Eq:diff1_1}) the observation in Eq (\ref{observation1}) is used. 
	
Recall the definition of the matrix $\mb P=\sum_{p}  \mb F^H_p \mb A^H_p\mb A_p \mb F_p$, for each $p$ the matrix $(\mb A_p \mb F_p)^H(\mb A_p \mb F_p)$ is positive semi definite, i.e., $\mb y^H(\mb A_p \mb F_p)^H(\mb A_p \mb F_p)\mb y=\|\mb A_p \mb F_p\mb y\|^2_2\geq 0 \ \forall \ \mb y \in \mathbb{C}^{L} $ and any non-negative linear combination of positive semidefinite matrices is positive semidefinite (Observation 7.1.3 \cite{horn1990matrix}), then $\mb P$ is positive semidefinite. Since $\mb P$ is positive semidefinite, then $\mb R$ is positive definite. Now, using Theorem 7.1 in  \cite{zhang2011matrix}, the matrix $\mb R$ can be diagonalized as 
	\be \label{decomp} \mb R=\mb U \Lambda \mb U^H \ee where $\Lambda$ is diagonal matrix with the eigenvalues of $\mb R$ in the main diagonal, and $\mb U$ is unitary, i.e., $\mb U^H \mb U=\mb U \mb U^H= \mb I$. 
	
	Using Eq (\ref{decomp}) in Eq (\ref{Eq:diff1}) yields
		\begin{IEEEeqnarray}{rCl} 
	\bar{f}(\mb x_{(k)})-\bar{f}(\bar{\mb x}_{(k)})
		&=& \frac{\beta}{4}(\mb w - \tilde{\mb w})^H(\mb I -\beta \mb U \Lambda \mb U^H)(\mb w - \tilde{\mb w})\nonumber\\
		&=& \frac{\beta}{4}(\mb w - \tilde{\mb w})^H\mb U(\mb I -\beta \Lambda) \mb U^H(\mb w - \tilde{\mb w})\nonumber\\
		&=&\frac{\beta}{4}\mb h^H(\mb I -\beta \Lambda) \mb h\label{Eq:diff2}
	\end{IEEEeqnarray}
	where $\mb h = \mb U^H(\mb w - \tilde{\mb w}) \in \mathbb{C}^L$. If the step $\beta$ is chosen in a way such that $\frac{\beta}{4}\mb h^H(\mb I -\beta \Lambda) \mb h \geq 0 \ \forall \ \mb h$ or the  matrix $(\mb I -\beta \Lambda)$ is positive semi-definite and  $\beta \geq 0$, then $\bar{f}(\mb x)-\bar{f}(\bar{\mb x})$ will be non-negative. Then, the following condition on $\beta$ must be satisfied
	\be \label{condi2}\mb I-\beta \Lambda \geq 0 \Leftrightarrow \beta \leq \frac{1}{\lambda_{\mb R}}\ee
	where $\lambda_{\mb R}$ is the largest eigenvalue of the matrix $\mb R$ ($\lambda_{\mb R}>0$ since $\mb R$ is positive definite).
\end{proof}
\subsection{Proof of Lemma \ref{Lemma:L2}}
\label{Subsection:AppB}
\begin{proof}:
	Recall that the input to retraction operator , $\bar{\mathbf{x}}_{(k)}$, is the update of $\mathbf{x}_{(k)}$ on the tangent space, i.e., $\bar{\mathbf{x}}_{(k)}\not\in\mathcal{S}^L$ and hence $|\bar{{x}}_{l(k)}|\geq1 \ \forall \ l$. On the other hand, the output from the retraction step is a constant modulus point, i.e, $ \mathbf{x}_{(k+1)}=\mathbf{R}(\bar{\mathbf{x}}_{(k)})\in \mathcal{S}^L$. Then, $\bar{\mathbf{x}}_{(k)}$ can be written in terms of $\mb x_{(k+1)}$ as 
	\be \label{diffPsi}
	\bar{\mathbf{x}}_{(k)}=\mathbf{x}_{(k+1)}+\Psi \mathbf{x}_{(k+1)}
	\ee
	where $\Psi= \text{diag}(\psi_1,\psi_2, \hdots , \psi_L)$ is non-negative diagonal matrix. Note that Eq (\ref{diffPsi}) is derived from the retraction step, see subsection \ref{equivalence_retraction} of this Appendix. Using Eq (\ref{diffPsi}) the difference between the values of the cost function at $\bar{\mathbf{x}}_{(k)}$ and $\mathbf{x}_{(k+1)}$ will be
	\begin{IEEEeqnarray}{rCl}
		\bar{f}(\bar{\mb x}_{(k)})&-&\bar{f}(\mb x_{(k+1)})\nonumber\\* \quad
		&=&\mathbf{x}_{(k+1)}^H(\mb P +\gamma \mb I)\Psi \mathbf{x}_{(k+1)}\nonumber\\
		&&+\mathbf{x}_{(k+1)}^H\Psi (\mb P +\gamma \mb I) \mathbf{x}_{(k+1)}\nonumber\\
		&&+\mathbf{x}_{(k+1)}^H\Psi (\mb P +\gamma \mb I)\Psi \mathbf{x}_{(k+1)}\nonumber\\
		&&-\mb q^H\Psi \mb x_{(k+1)}-\mb x_{(k+1)}^H\Psi \mb q\nonumber\\
		&=&2 \gamma \mathbf{x}_{(k+1)}^H\Psi \mathbf{x}_{(k+1)}+\mathbf{x}_{(k+1)}^H\big(\Psi \mb P +  \mb P \Psi\big) \mathbf{x}_{(k+1)}\nonumber\\
		&&+\>\mathbf{x}_{(k+1)}^H\Psi (\mb P +\gamma \mb I)\Psi \mathbf{x}_{(k+1)}\nonumber\\
		&&-\mb q^H\Psi \mb x_{(k+1)}-\mb x_{(k+1)}^H\Psi \mb q\nonumber\\
		&\geq& \> 2 \gamma \mathbf{x}_{(k+1)}^H\Psi \mathbf{x}_{(k+1)}+\mathbf{x}_{(k+1)}^H\big(\Psi \mb P+ \mb P\Psi\big) \mathbf{x}_{(k+1)}\nonumber\\
		&&- \>\mb q^H\Psi \mb x_{(k+1)}-\mb x_{(k+1)}^H\Psi \mb q\label{Eq:diff3_1}\\
		&=& \> 2 \gamma\|\Psi \mathbf{x}_{(k+1)}\|_1+\mathbf{x}_{(k+1)}^H\big(\Psi \mb P+ \mb P \Psi\big) \mathbf{x}_{(k+1)}\nonumber\\
		&&- \>\mb q^H\Psi \mb x_{(k+1)}-\mb x_{(k+1)}^H\Psi \mb q\label{Eq:diff3}
	\end{IEEEeqnarray}
	The inequality Eq (\ref{Eq:diff3_1}) holds since $(\mb P +\gamma \mb I)$ is positive semidefinite and the equality Eq (\ref{Eq:diff3}) holds since $\mb x_{(k+1)}^H\Psi \mb x_{(k+1)}=\|\Psi \mb x_{(k+1)}\|_1$ {and this is true since $\mb x_{(k+1)}$ is constant modulus vector}. To go further in this simplification, {and since $\mb P\geq 0$ and $\Psi\geq 0$}, the following theorem about $\big(\Psi \mb P + \mb P \Psi\big)$ (Theorem 7.5 in \cite{zhang2011matrix}) can be utilized
	\begin{IEEEeqnarray}{rCl}
		\IEEEeqnarraymulticol{3}{l}{
		-\frac{1}{4} \lambda_{\Psi} \lambda_{\mb P }\mb I_L\leq\big(\Psi \mb P + \mb P \Psi\big)
	}\label{ineq1_1}\\* \quad
		&\Rightarrow&  -\frac{L}{4} \lambda_{\Psi} \lambda_{\mb P }\leq \mb x_{(k+1)}^H\big(\Psi \mb P + \mb P \Psi\big)\mb x_{(k+1)}\label{ineq1}
	\end{IEEEeqnarray}
where $\lambda_{\Psi}$ and $ \lambda_{\mb P }$ are the largest eigenvalue of $\Psi$ and $\mb P$, respectively. Using Eq (\ref{ineq1}) in Eq (\ref{Eq:diff3}) yields
	\begin{IEEEeqnarray}{rCl}
		\bar{f}(\bar{\mb x}_{(k)})&-&\bar{f}(\mb x_{(k+1)})\nonumber\\* \quad
		&\geq& 2 \gamma\|\Psi \mb x_{(k+1)}\|_1-\frac{L}{4} \lambda_{\Psi} \lambda_{\mb P}\nonumber\\
		&&- \> \mb q^H\Psi \mb x_{(k+1)}-\mb x_{(k+1)}^H\Psi \mb q\label{Eq:diff4_1}\\
		&=& 2 \gamma\|\Psi \mb x_{(k+1)}\|_1-\frac{L}{4} \lambda_{\mb P}\|\Psi \mb x_{(k+1)}\|_{\infty}\nonumber\\
		&&- \>\mb q^H\Psi \mb x_{(k+1)}-\mb x_{(k+1)}^H\Psi \mb q\label{Eq:diff4_2}\\
		& \geq & \> 2 \gamma\|\Psi \mb x_{(k+1)}\|_1-\frac{L}{4} \lambda_{\mb P}\|\Psi \mb x_{(k+1)}\|_1\nonumber\\
		&&- \>\mb q^H\Psi \mb x_{(k+1)}-\mb x_{(k+1)}^H\Psi \mb q\label{Eq:diff4_3}\\
		& \geq &  \>2 \gamma\|\Psi \mb x_{(k+1)}\|_1-\frac{L}{4} \lambda_{\mb P}\|\Psi \mb x_{(k+1)}\|_1\nonumber\\ 
		&&-\> 2\|\Psi \mb x_{(k+1)}\|_2 \|\mb q\|_2\label{Eq:diff4_4}\\
		& \geq &  \>2 \gamma\|\Psi \mb x_{(k+1)}\|_1-\frac{L}{4} \lambda_{\mb P}\|\Psi \mb x_{(k+1)}\|_1\nonumber\\ 
		&&-\> 2\|\Psi \mb x_{(k+1)}\|_1 \|\mb q\|_2\label{Eq:diff4_5}\\
		&=& \> -\Big(\frac{L}{4} \lambda_{\mb P}+2 \|\mb q\|_2\Big)\|\Psi \mb x_{(k+1)}\|_1\nonumber\\
		&& \>+2 \gamma\|\Psi \mb x_{(k+1)}\|_1\nonumber\\
		&=& \> \big(2 \gamma-\frac{L}{4} \lambda_{\mb P}-2 \|\mb q\|_2\big)\|\Psi \mb x_{(k+1)}\|_1\label{Eq:diff4_6}\\
		& \geq &  \> 0 \label{Eq:diff4}
	\end{IEEEeqnarray}
where Eq (\ref{Eq:diff4_1}) holds from Eq (\ref{ineq1}), Eq (\ref{Eq:diff4_2}) holds since $\|\Psi \mb x_{(k+1)}\|_{\infty} = \text{max}_l |\psi_l x_{l(k+1)}|= \lambda_{\Psi}$, Eq (\ref{Eq:diff4_3}) holds since $\|\Psi\mb x_{(k+1)}\|_{\infty}\leq \|\Psi \mb x_{(k+1)}\|_1$, Eq (\ref{Eq:diff4_4}) holds since $\mb q^H\Psi \mb x_{(k+1)}\leq \|\Psi \mb x_{(k+1)}\|_2 \|\mb q\|_2$, Eq (\ref{Eq:diff4_5}) holds since $\|\Psi \mb x_{(k+1)}\|_2 \leq \|\Psi \mb x_{(k+1)}\|_1$, and finally Eq (\ref{Eq:diff4}) holds if 
	\be \label{cond2}
	\gamma \geq \frac{L}{8}\lambda_{\mb P}+\|\mb q\|_2
	\ee
\end{proof}
\subsection{Proof of Lemma \ref{Lemma:L3}}
\label{Subsection:AppC}
\begin{proof}:
	From Lemmas \ref{Lemma:L1} and \ref{Lemma:L2}, we have
	\be \bar{f}(\mb x_{(k)})\geq \bar{f}(\mathbf{x}_{(k+1)}) \Rightarrow \bar{f}(\mb x_{(k)})- \bar{f}(\mathbf{x}_{(k+1)}) \geq0\ee
	Since $\mb x_{(k)}^H\mb x_{(k)}=\mb x_{(k+1)}^H \mb x_{(k+1)}=L$, then 
	\be  \bar{f}(\mb x_{(k)})- \bar{f}(\mathbf{x}_{(k+1)})=f(\mb x_{(k)})- f(\mathbf{x}_{(k+1)})\geq 0
	\ee
	Then the sequence $\{f(\mathbf{x}_{(k)})\}_{k=0}^{\infty}$ is non-increasing and since $f(\mathbf{x})\geq 0, \ \forall \ \mb x$ and hence it is bounded below, then it converges to a finite value $f^{*}$.
\end{proof}
\subsection{The equivalence between Eq (\ref{diffPsi}) and the retraction step}
\label{equivalence_retraction}
In this subsection, we show the equivalence between Eq (\ref{diffPsi}) and the retraction step (Step 5 in Algorithm \ref{Alg:PDR_basic}). This Eq is used to write $\bar{\mathbf{x}}_{(k)}$ in terms of $\mathbf{x}_{(k+1)}$. Starting from the retraction formula, this equivalence can be shown as follows:  
	\begin{IEEEeqnarray}{rCl}
	\mathbf{x}_{(k+1)}&=&\mathbf{R}\big(\bar{\mathbf{x}}_{(k)}\big)=\bar{\mathbf{x}}_{(k)} \odot \frac{1}{|\bar{\mathbf{x}}_{(k)}|}=\bar{\Psi}\bar{\mathbf{x}}_{(k)}\label{Eq:Proof_retraction1}
	\end{IEEEeqnarray}
where $\bar{\Psi}= \text{diag}\Big(\frac{1}{|\bar{x}_{1(k)}|},\frac{1}{|\bar{x}_{2(k)}|}, \hdots , \frac{1}{|\bar{x}_{L(k)}|}\Big)$. 
Solving for $\bar{\mathbf{x}}_{(k)}$ yields 
\begin{IEEEeqnarray}{rCl}
	\bar{\mathbf{x}}_{(k)}&=&\bar{\Psi}^{-1}\mathbf{x}_{(k+1)}\label{Eq:Proof_retraction2}
\end{IEEEeqnarray}

{Since $\bar{\mathbf{x}}_{(k)}\in \mathcal{T}_{\mb x_{(k)}}\mathcal{S}^L$ and hence $|\bar{{x}}_{l(k)}|\geq1, \ l =1,2,\hdots,L$, the quantity $|\bar{{x}}_{l(k)}|$ can be written as $|\bar{{x}}_{l(k)}|=1+\psi_l$ with $\psi_l \geq 0\ \forall \ l$. Using this, the matrix $\bar{\Psi}$ will be 
$\bar{\Psi} =\text{diag}\Big(\frac{1}{1+\psi_1},\frac{1}{1+\psi_2}, \hdots , \frac{1}{1+\psi_L}\Big)\label{matrixPsi1}$, the matrix $\bar{\Psi}^{-1}$ will be 
\begin{IEEEeqnarray}{rCl} 
	\bar{\Psi}^{-1}&=&\text{diag}\Big(1+\psi_1,1+\psi_2, \hdots , 1+\psi_L\Big)\nonumber\\
	&=&\mb I_{L}+\Psi \label{matrixPsi2}
\end{IEEEeqnarray}
where $\Psi= \text{diag}\big(\psi_1,\psi_2, \hdots , \psi_L\big)$. Substituting this value of $\bar{\Psi}^{-1}$ in Eq (\ref{Eq:Proof_retraction2}), the vector $\bar{\mathbf{x}}_{(k)}$ will be 
\begin{IEEEeqnarray}{rCl} 
	\bar{\mathbf{x}}_{(k)}&=&\big(\mb I_{L}+\Psi\big)\mathbf{x}_{(k+1)}\nonumber\\
	&=&\mathbf{x}_{(k+1)}+\Psi\mathbf{x}_{(k+1)}\label{Eq:Proof_retraction3}
\end{IEEEeqnarray}
Eq (\ref{Eq:Proof_retraction3}) hence reduces to Eq (\ref{diffPsi}).}
\scriptsize
\bibliographystyle{IEEEbib}
\bibliography{refs3,refs2}
\end{document}